\numberwithin{equation}{section}
\newtheorem{thm}{Theorem}[section]
\newtheorem{cor}[thm]{Corollary}
\newtheorem{defi}{Definition}[section]
\newtheorem{lemma}[thm]{Lemma}
\newtheorem{remark}{Remark}[section]
\begin{document}

\author{Seunghyun Lee\thanks{Department of Statistics, Columbia University, 1255 Amsterdam Avenue, New York, NY, USA. Email: sl4963@columbia.edu}  and  Hyungbin Park\thanks{Department of Mathematical Sciences, Seoul National University, 1, Gwanak-ro, Gwanak-gu, Seoul, Republic of Korea. Email: hyungbin@snu.ac.kr, hyungbin2015@gmail.com}  
}
\title{Conditions for bubbles to arise under heterogeneous beliefs}

\maketitle	

\abstract{}
This study examines the price bubble of a continuous-time asset traded in a market with heterogeneous investors. Our market constitutes of a positive mean-reverting asset and two groups of investors with different beliefs about the model parameters. We provide an equivalent condition for the formation of bubbles and show that price bubbles may not form, although there are heterogeneous beliefs. This suggests that investor heterogeneity does not always result in a bubble, and additional analysis on the drift term is required. Additionally, when investors agree on the volatility, the explicit bubble size can be calculated through a differential equation argument. Analysis of this bubble formula clearly justifies that our model is indeed realistic.

\section{Introduction}
The bubble theory is an indispensable concept that provides a rationale for rapidly increasing prices in the financial market. Price bubbles tend to have a serious impact on the economy, leading to over-investment, uncontrolled trading, and other financial crisis (e.g. \cite{xiong2013bubbles}). As bubbles such as the Internet bubble and the housing bubble continue to arise in the real world, identifying their cause has become an important focus. There have been numerous approaches to explain financial bubbles, including methods that ground their existence on heterogeneous beliefs, herding of traders, existence of stock analysts, and cultural and political changes (e.g. \cite{harrison1978speculative}, \cite{shiller2000measuring}, \cite{dass2008mutual}, \cite{bradley2008analyst}).

We build our results using the approaches of \citet{harrison1978speculative} and \citet{chen2011asset}. Harrison and Kreps were the first to explain price bubbles by heterogeneous beliefs via the analysis of an asset having finitely many dividend values. Their main argument was that investors buy an asset at a price exceeding their intrinsic valuation because of the possibility of selling it to a more optimistic buyer in the future. Although their model effectively explains the concept of bubbles, it lacks reality as they assumed a finite discrete time horizon and only finitely many dividends.

Extensive research has been conducted to modify the Harrison and Kreps model to continuous time (e.g. \citet{scheinkman2003overconfidence},  \citet{chen2011asset} and \citet{muhle2018risk}). Even though these 3 papers inherit the central idea of explaining bubbles via the possibility of trading the asset in the future (or in Economic terms, the resale option), the underlying asset and the specific model structure differ significantly.

First, Chen and Kohn extended the Harrison and Kreps model to continuous time by considering a dividend that follows the mean-reverting Ornstein--Uhlenbeck process. They assumed that heterogeneous beliefs are reflected on different mean-reversion rates and focused on the minimal equilibrium price, similar to Harrison and Kreps. They related this equilibrium price to a 2nd order non-linear differential equation and solved it to quantify the bubble. The bubble always exists and is identified as a function of the initial dividend rate.

Unlike Chen and Kohn, Scheinkman and Xiong focused the identifying the source of behavior heterogeneity, which they concluded as overconfidence. Although they also considered a dividend that follows the Ornstein--Uhlenbeck process, they assumed that the dividend is a noisy observation of an unobservable fundamental variable and considered two additional information streams (referred to as ``signals" in the original paper). They also considered trading costs, and this is reflected in the equilibrium price. Their conclusion is that investor overconfidence in different signals causes heterogeneous beliefs and leads to a bubble, which is a function of the average belief difference.

Finally, Muhle--Karbe and Nutz explained bubbles in derivative prices through a finite time equilibrium model with limited short sales. They assumed an exogenous underlying asset taking a general (uniformly elliptic) form and focused on the equilibrium price of its derivative. Similar to Chen and Kohn, they assumed that agents disagree on the model parameters. They paired the equilibrium price with investment strategies and assumed a market clearing condition. Under regularity assumptions, the price is identified as an unique solution of a Hamilton-Jacobi type PDE. The corresponding optimal strategies are closely related to the nonlinear term of the PDE and one group tends to possess the asset entirely.

Recently, new approaches have been developed to explain bubbles through the concept of holding costs and liquidity costs (e.g. \citet{muhle2020asset}). Cost were shown to influence future trading opportunities by adding penalization in terms of the expected return.

In this study, we focus on the relatively simple, yet powerful framework of \citet{chen2011asset} and investigate conditions that result in the formation of bubbles. Related literature such as \citet{scheinkman2003overconfidence} and \citet{chen2011asset} discussed a permanent bubble that exists regardless of the model parameters. \citet{muhle2018risk} implicitly mentioned the possibility of bubbles not forming, but they considered it to be extreme and assert that bubbles typically arise. In contrast, our main argument is that bubbles may not exist in general circumstances, and this condition is related to the drift term of the underlying asset (and invariant on the volatility term). The minimal equilibrium price, a concept directly related to the bubble, is identified as a viscosity weak solution of a differential equation. Additionally, under the assumption that our investors agree on the volatility, the minimal equilibrium price becomes a classical solution and the bubble size can be explicitly calculated. Even though the formula itself is complicated, we can verify that our bubble is intuitive, and reflects behaviors of the real world. Our contribution is mainly theoretical, as we claim that the existence of price bubbles under heterogeneity is not trivial and should be approached with caution.

As our model is built upon the Chen--Kohn model, it is important to discuss the differences and similarities. First, while the Chen--Kohn model assumed that the asset follows an Ornstein--Uhlenbeck process with heterogeneous mean--level beliefs, we assume a Cox--Ingersoll--Ross (CIR) process and allowed heterogeneous beliefs on all parameters. Our model improves the Chen--Kohn model by restricting the dividend rate and the price to be positive, as well as considers a more flexible notion of heterogeneity.

Second, while the Chen--Kohn model always resulted in a bubble, our model also considers the case without bubbles and classify such cases. In fact, our model generalizes the Chen--Kohn model in the sense that when the investors disagree only on the mean--level, a bubble always arises. Also, we mention that our results have two (typically different) threshold values, $\bar{D}$ and $\tilde{D}$, for the formula of the intrinsic price and equilibrium price whereas they coincide for the Chen--Kohn model. This leads to the loss of regularity of the minimal equilibrium price, as investigated throughout the paper.
 
Finally, we discuss the similarities. As the asset dynamics share the same drift term, we can easily deduce that the expected dividend rate in our model is the same as Chen--Kohn. Also, as both SDEs are time homogeneous, the price is a function of only the dividend rate and is related to similar differential equations.\footnote{However, the corresponding solutions (and consequently, the bubble size) are different due to the different structure of differential equations. Unlike Chen--Kohn, our equation is not generally solvable.} We also add a disclaimer that some concepts and mathematical methods used in this paper are based on \citet{harrison1978speculative} and \citet{chen2011asset}, for instance, the construction and existence of the minimal equilibrium price, the definition of the bubble size, and the proof of Corollary 5.3.

The remainder of this paper is organized as follows. We present our model with the necessary definitions in Section 2. In Section 3, we define the minimal equilibrium price and prove some properties. In Section 4, we consider the general case and provide our main results on the conditions under which bubbles exist. In Section 5, we consider a simpler model where investors agree on the volatility parameter, and calculate the explicit bubble size in terms of confluent hypergeometric functions. We also visualize the bubble size and provide an intuitive interpretation. Finally, in Section 6, we conclude the paper and propose directions for further research. The proof of details is provided in the appendix.

\section{Formulation}
We explore a positive mean-reverting model with heterogeneous investors. We consider two groups of investors in a continuous-time market similar to the Chen--Kohn framework. We consider only one asset: a mean-reverting dividend stream. The Chen--Kohn model assumed that the $i$th group believes that the dividend rate, $D_t$, follows the Ornstein--Uhlenbeck process
$$dD_t = \kappa_i(\theta-D_t)dt+\sigma dB_t$$
where $i=1,2$ and $\kappa_1 > \kappa_2 >0$.
However, this model has the drawback in that the dividend rates may become negative. This can cause the intrinsic value and the equilibrium value of the asset to be negative. To justify a negative price, we must assume that the asset is always possessed by one of the two groups, which is unrealistic. We resolve this issue by considering a positive price process that follows the CIR process. We assume that the $i$th group believes that the dividend rate of the asset satisfies
\begin{equation}\label{D_t}
\begin{aligned}
dD_t = \kappa_i \left( \theta_i - D_t \right) dt + \sigma_i \sqrt{D_t} dB_t.
\end{aligned}
\end{equation}
Here, $\kappa_i, \theta_i, \sigma_i$ are positive real values and $B$ is a standard Brownian motion. The parameter $\kappa_i$ is an indicator of the speed of mean-reversion, $\theta_i$ represents the long term dividend level, and $\sigma_i$ is the volatility parameter. Without the loss of generality, we will assume that $$\kappa_1 \ge \kappa_2 > 0$$ throughout this study. Note that this process is also mean-reverting and has dynamics similar to the Chen--Kohn model.

Also, note that we consider heterogeneous long-term mean levels and volatility ($\theta_i$'s and $\sigma_i$'s) for each group. These parameters are introduced to model both cases when bubbles exist and not exist. \citet{chen2011asset} show that bubbles always exist when beliefs on the mean-reversion parameter differ. However, we can easily check that the discrepancy in the mean level doesn't lead to a bubble. This results in a natural question: what relationship between the parameters creates a bubble?

In this study, we assume that there are only two groups of investors and their beliefs are different. More specifically, we assume that the heterogeneous beliefs are directly projected onto the parameters $\kappa_i, \theta_i$ and $\sigma_i$ in \eqref{D_t}. We assume that the belief of each group is constant throughout time, and cannot be altered. Furthermore, we suppose that there is no transaction cost, and our investors are risk-neutral: they calculate the present value of the asset by calculating the expectation of the discounted dividend. We also assume that $\frac{2\kappa_i\theta_i}{\sigma_i^2} \ge 1$ for both $i=1,2$. This assumption is common for the CIR model, as it almost surely precludes the dividend rate to become 0.

In the academic context, bubbles are defined as the difference between the market price and the fundamental price (e.g. \citet{scheinkman2003overconfidence}, \citet{chen2011asset}, \citet{brunnermeier2016bubbles}). For the market price, we use the equilibrium price of a market with transactions. For the fundamental price, we use the concept of intrinsic value, which is the price without any transactions. This concept is also called the ``fundamental price" or ``fundamental value" in \citet{scheinkman2003overconfidence} and \citet{jarrow2010asset}, and the exact definition of the terms differs by paper.

First, we define the ``intrinsic value" as a theoretical price of an asset determined without any transactions. For simplicity, we assume that our risk-neutral investors share a constant discount rate $\lambda$. Hence, group $i$ would evaluate the value of the asset whose dividend rate is $D$ at time $t$ by
\begin{equation}\label{intrinsic def}
\begin{aligned}
\mathbb{E}^{\mathbb{Q}_i}\left[\int_t^{\infty} e^{-\lambda (s-t)} D_sds | D_t=D\right].
\end{aligned}
\end{equation}
Here, $\mathbb{Q}_i$ is the measure associated with our dividend stream defined in \eqref{D_t}. Note that $D_t$ is a stationary process, and hence the expectation in \eqref{intrinsic def} is independent of $t$. In other words, we can simplify \eqref{intrinsic def} as
$$\mathbb{E}^{\mathbb{Q}_i}\left[\int_0^{\infty} e^{-\lambda s} D_sds | D_0=D\right].$$
Now, we are ready to define the intrinsic value.

\begin{defi}
An intrinsic value of an asset whose initial dividend rate is $D$ is defined as a function $I:\mathbb{R}^+ \to \mathbb{R}^+$ where  $\mathbb{R}^+$ is the set of positive real numbers and 
$$I(D) = \max_{i=1,2} \mathbb{E}^{\mathbb{Q}_i}\left[\int_0^{\infty} e^{-\lambda s} D_sds | D_0=D\right]$$
\end{defi}
\noindent The maximum in the definition is natural because the asset would be possessed by the group that prices it highly. As the conditional expectation of $D_t$ has an explicit formula 
\begin{equation}\label{conditional}
\begin{aligned}
\mathbb{E}^{\mathbb{Q}_i}(D_t | D_0) = D_0e^{-\kappa_i t}+\theta_i \left(1-e^{-\kappa_it}\right)
\end{aligned}
\end{equation}
(e.g. see \citet{jafari2017moments}), we can easily simplify $I(D)$. As $D_t$ is almost surely positive, we can change the expectation and integral. Now,
\begin{equation}\label{intrinsic formula}
\begin{aligned}
I(D) &= \max_{i=1,2} \Bigl\{ \int_0^{\infty} e^{-\lambda s} \mathbb{E}^{\mathbb{Q}_i}[  D_s | D_0=D]ds \Bigr\}
= \max_{i=1,2} \Bigl\{ \int_0^{\infty} e^{-\lambda s}De^{-\kappa_i s}+\theta_i \left(1-e^{-\kappa_is}\right) ds\Bigr\} \\
&= \max_{i=1,2} \Bigl\{ \frac{\theta_i}{\lambda}+\frac{1}{\lambda+\kappa_i}(D-\theta_i)\Bigr\}.
\end{aligned}
\end{equation}

\noindent An interesting fact is that this formula is exactly the same as the intrinsic value from the Ornstein--Uhlenbeck process. We can also see that the different beliefs on $\sigma_i$ do not influence the intrinsic valuation.

\eqref{intrinsic formula} can be simplified by maximizing over $i=1,2$. As we assumed $\kappa_1 \ge \kappa_2$, we calculate $I(D)$ when $\kappa_1 = \kappa_2$ and $\kappa_1 > \kappa_2$ separately.
First, when $\kappa_1 = \kappa_2$,
$$I(D) = \frac{D}{\lambda+\kappa_1}+\frac{\max(\theta_1, \theta_2) \kappa_1}{\lambda(\lambda+\kappa_1)}.$$
We can clearly see that this value is positive.
Next, when $\kappa_1 > \kappa_2$,
$$ I(D) = \begin{cases}
\frac{D}{\lambda+\kappa_1}+\frac{\theta_1\kappa_1}{\lambda(\lambda+\kappa_1)}, & \mbox{if }D\le \Bar{D} \\
\frac{D}{\lambda+\kappa_2}+\frac{\theta_2\kappa_2}{\lambda(\lambda+\kappa_2)}, & \mbox{if }D\ge \Bar{D}
\end{cases}$$
Here, $\Bar{D}:=\frac{\kappa_1\kappa_2(\theta_1-\theta_2)+\lambda (\kappa_1\theta_1 - \kappa_2\theta_2)}{\lambda (\kappa_1-\kappa_2)}$, the value $D$ that satisfies
$$\frac{D}{\lambda+\kappa_1}+\frac{\theta_1\kappa_1}{\lambda(\lambda+\kappa_1)} = \frac{D}{\lambda+\kappa_2}+\frac{\theta_2\kappa_2}{\lambda(\lambda+\kappa_2)}.$$
Note that $\Bar{D}$ can be negative if $\kappa_1\theta_1 < \kappa_2\theta_2$. In this case, $I(D)$ can be simplified as $\frac{D}{\lambda+\kappa_2}+\frac{\theta_2\kappa_2}{\lambda(\lambda+\kappa_2)}.$

Next, we consider transactions between investors, and define a market equilibrium price. Suppose that at time $t$, the dividend rate is $D_t$ and the price of the asset is $P(D_t, t)$. Assuming that group $i$ holds the asset, they can sell it at any stopping time $\tau \ge t$. Hence, the asset value evaluated by group $i$ at time $t$ would be
\begin{equation}\label{ivalue}
\begin{aligned}
\sup_{\tau \ge t} \mathbb{E}^{\mathbb{Q}_i}\left[ \int_t^{\tau} e^{-\lambda(s-t)} D_sds +e^{-\lambda(\tau-t)}P(D_{\tau},\tau) | D_t\right].
\end{aligned}
\end{equation}
At the equilibrium, the maximal valuation of the asset must be equal to its price. We can formalize this notion into the following definition.

\begin{defi}
An equilibrium price is a function $P:\mathbb{R}^+ \times \mathbb{R}^+ \to \mathbb{R}^+ $ such that
$$P(D,t) \ge I(D)$$
and
$$P(D,t) = \max_{i=1,2}\sup_{\tau \ge t} \mathbb{E}^{\mathbb{Q}_i}\left[ \int_t^{\tau} e^{-\lambda(s-t)} D_sds +e^{-\lambda(\tau-t)}P(D_{\tau},\tau) | D_t\right]$$
for all $t>0, D>0.$
\end{defi}

\noindent Because we defined the equilibrium price to be at least the intrinsic value, its positiveness is guaranteed. Note that Definition 2.2 implies that the asset would be possessed by the group with a higher assessment of \eqref{ivalue}. Owing to its implicit structure, it is impossible to characterize every equilibrium price (in fact, there are infinitely many possible prices). In this study, we focus on the minimal equilibrium price and determine whether it is strictly larger than the intrinsic value.
 
\section{The minimal equilibrium price}
In this section we focus on a specific equilibrium price, the minimal equilibrium price. As the name suggests, the minimal equilibrium price is defined as the infimum of all possible equilibrium prices. This price is natural in the sense that it is the equilibrium price obtained from a market without bubbles, due to speculation. In the next theorem, we provide an iterative construction of the minimal equilibrium price and prove its properties.

\begin{thm}
A minimal equilibrium price exists, is time-independent, and is in fact an equilibrium price. Moreover, it is continuous in $D$.
\end{thm}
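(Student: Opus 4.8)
The plan is to build the minimal equilibrium price via the standard iterative (Picard-type) construction from Harrison--Kreps, adapted to our continuous-time setting. Define the operator $\mathcal{T}$ acting on functions $f:\mathbb{R}^+\to\mathbb{R}^+$ by
\begin{equation}
(\mathcal{T}f)(D) = \max_{i=1,2}\sup_{\tau\ge 0}\mathbb{E}^{\mathbb{Q}_i}\Bigl[\int_0^{\tau} e^{-\lambda s} D_s\,ds + e^{-\lambda\tau} f(D_\tau)\,\Big|\,D_0 = D\Bigr].
\end{equation}
Because the dividend SDE \eqref{D_t} is time-homogeneous, $\mathcal{T}$ maps time-independent functions to time-independent functions, which is why we may expect the minimal equilibrium price to be time-independent; this is the first thing I would record. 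Then I would set $P_0 := I$ (the intrinsic value) and $P_{n+1} := \mathcal{T}P_n$. Since the $\tau\equiv 0$ choice is always available, $\mathcal{T}f\ge f$ pointwise whenever $f\ge I$ (using $\mathcal{T}I\ge I$, which holds because group $i$'s valuation with immediate sale returns $I$ and with $\tau=\infty$ returns group $i$'s intrinsic valuation), so $(P_n)$ is a nondecreasing sequence. I would also show monotonicity of $\mathcal{T}$ (if $f\le g$ then $\mathcal{T}f\le\mathcal{T}g$) and that every equilibrium price $P$ is a fixed point with $P\ge I$, hence $P\ge P_n$ for all $n$ by induction; therefore $P_\infty := \lim_n P_n = \sup_n P_n$ is dominated by every equilibrium price.

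Next I would establish an a priori upper bound to guarantee $P_\infty<\infty$: from \eqref{conditional} one gets $\mathbb{E}^{\mathbb{Q}_i}[D_s\mid D_0=D]\le D + \theta_i$, so every term inside the $\sup_\tau$ is bounded by $\frac{D+\max_i\theta_i}{\lambda}$ plus a bound on $e^{-\lambda\tau}f(D_\tau)$; an affine-in-$D$ bound of the form $P_n(D)\le aD+b$ is preserved by $\mathcal{T}$ (using that affine functions are, up to discounting, super-solutions of the relevant ODE, or more elementarily by optional stopping on $e^{-\lambda t}(aD_t+b) + \text{(dividend integral)}$), giving a uniform affine bound. With $P_\infty$ finite and monotone-increasing limit of the $P_n$, I would then pass to the limit in the fixed-point equation. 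One direction, $P_\infty\ge\mathcal{T}P_\infty$, is delicate; the other, $P_\infty\le\mathcal{T}P_\infty$... actually the monotone convergence theorem applied inside the expectation for each fixed $\tau$ gives $(\mathcal{T}P_\infty)(D) = \sup_n(\mathcal{T}P_n)(D) = \sup_n P_{n+1}(D) = P_\infty(D)$, provided we can interchange $\sup_\tau$ with $\sup_n$; since both are suprema this interchange is automatic, so $\mathcal{T}P_\infty=P_\infty$ and $P_\infty$ is an equilibrium price (it clearly dominates $I=P_0$). Combined with the previous paragraph, $P_\infty$ is the minimal equilibrium price.

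For continuity in $D$, I would first note each $P_n$ is continuous: $I$ is continuous (it is a max of two affine functions), and I would show $\mathcal{T}$ preserves continuity — indeed it should preserve Lipschitz-type control, since the map $D\mapsto D_s$ under \eqref{D_t} depends continuously (in $L^1$, uniformly on compacts in $s$ after localization) on $D_0$ by standard SDE estimates for the CIR process, and taking $\sup_\tau$, $\max_i$, and the discounted integral preserves continuity. The real issue is that a pointwise-increasing limit of continuous functions is only lower semicontinuous in general, so I would upgrade to local uniform convergence: using the uniform affine bound and an equicontinuity estimate on $\{P_n\}$ on compact sets (again from the SDE flow estimate, which gives a modulus of continuity for $\mathcal{T}f$ depending only on the affine bound for $f$, not on $f$ itself), Dini's theorem then yields uniform convergence on compacts, so $P_\infty$ is continuous. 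The main obstacle I anticipate is precisely this equicontinuity step — controlling the modulus of continuity of $\mathcal{T}f$ uniformly over the family $(P_n)$ — which requires the CIR flow estimate $\mathbb{E}^{\mathbb{Q}_i}|D_s^{D}-D_s^{D'}|\le C(s)|D-D'|$ (or a Hölder version) together with care that the stopping time $\tau$ and the group index $i$ may differ across the two starting points; handling this by using a near-optimal $(\tau,i)$ for the larger value and the same $(\tau,i)$ for the smaller one gives a one-sided bound, and symmetrizing closes the argument.
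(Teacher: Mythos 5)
Your construction of $P_*$ (the monotone iteration $P_0=I$, $P_{n+1}=\mathcal{T}P_n$, domination by every equilibrium price, passage to the limit in the fixed-point equation) is essentially the paper's, which simply cites Theorem 3.1 of Chen--Kohn for this part; your added detail on the a priori affine bound and the interchange of the two suprema is correct. Where you genuinely diverge is the continuity proof. The paper works directly with the limit $P_*$: for $D_j\to D$ it tests the equilibrium identity with the hitting times $\tau_j=\inf\{t: D_t=D\}$ (started from $D_j$) and $\tau_j'=\inf\{t: D_t=D_j\}$ (started from $D$), lets $\tau_j,\tau_j'\to 0$, and uses dominated convergence to obtain lower and upper semicontinuity separately; no regularity of the iterates $P_n$ is needed, only that $P_*$ is a finite equilibrium price. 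You instead propagate a modulus of continuity through the iteration and pass to a locally uniform limit. Two remarks on your route. First, Dini's theorem is the wrong tool: it presupposes continuity of the limit, so it cannot be what upgrades pointwise to uniform convergence here; what you actually need is that equicontinuity plus pointwise convergence yields local uniform convergence, or more simply that a pointwise limit of uniformly Lipschitz functions is Lipschitz. Second, the equicontinuity step you flag as the main obstacle does close: by the Yamada--Watanabe comparison theorem, CIR paths driven by the same Brownian motion from $D\ge D'$ satisfy $D^D_t\ge D^{D'}_t$, and the process $\int_0^t e^{-\lambda s}\bigl(D^D_s-D^{D'}_s\bigr)\,ds+e^{-\lambda t}\bigl(D^D_t-D^{D'}_t\bigr)/(\lambda+\kappa_i)$ is a nonnegative local martingale, hence a supermartingale, so $\mathcal{T}$ maps $\tfrac{1}{\lambda+\kappa_2}$-Lipschitz functions to $\tfrac{1}{\lambda+\kappa_2}$-Lipschitz functions; since $I$ has exactly this Lipschitz constant, so do all $P_n$ and hence $P_*$. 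Your approach thus delivers a slightly stronger conclusion (Lipschitz, not merely continuous) at the cost of the comparison-theorem machinery, whereas the paper's hitting-time argument is shorter and self-contained at the level of the limit.
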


\begin{proof}
First, we provide a construction of the minimal equilibrium price. This is essentially the same result as Theorem 3.1 of \citet{chen2011asset}. Hence, we will only briefly discuss the formation. For a nonnegative integer $k$, we recursively define $P_k(D, t)$ as $P_0(D,t)=I(D)$ and
\begin{equation}\label{p* def}
\begin{aligned}
P_k(D,t) = \max_{i=1,2}\sup_{\tau \ge t} \mathbb{E}^{\mathbb{Q}_i}\left[ \int_t^{\tau} e^{-\lambda(s-t)} D_sds +e^{-\lambda(\tau-t)}P_{k-1}(D_{\tau},\tau) | D_t = D \right]
\end{aligned}
\end{equation}
for $k\ge1$. This sequence of functions is nondecreasing in $k$, and 
$$P_*(D) = P_*(D,t) := \lim_{k \to \infty} P_k(D,t)$$
is time-independent and is the minimal equilibrium price. Note that we can check that $P_*$ is indeed an equilibrium price. A more detailed proof is provided in Theorem 3.1 of \citet{chen2011asset}.

Now, we prove that $P_*$ is continuous by showing $P_*$ is both lower semicontinuous and upper semicontinuous.
First, we show that $P_*(D) \le \liminf_{j\to \infty} P_*(D_j)$ for any $D_j \to D$. We fix any sequence $D_j \to D$ and $i=1,2$. We define a stopping time $\tau_j$ as follows:
$$\tau_j := \inf\left\{t\ge0 | D_t = D\right\}$$
where $D_t$ is a dividend stream with $D_0 = D_j$ that follows the belief of the $i$th investor.
Then, because $P_*$ is an equilibrium price, 
\begin{equation}
\begin{aligned}
P_*(D_j) &= \max_{i=1,2} \sup_{\tau \ge 0} \mathbb{E}^{\mathbb{Q}_i}\left[\int_0^{\tau}e^{-\lambda t}D_t dt + e^{-\lambda \tau} P_*(D_{\tau})\right]
\ge \mathbb{E}^{\mathbb{Q}_i}\left[\int_0^{\tau_j}e^{-\lambda t}D_t dt + e^{-\lambda \tau_j} P_*(D_{\tau_j})\right] \\
&= \mathbb{E}^{\mathbb{Q}_i}\left[\int_0^{\tau_j}e^{-\lambda t}D_t dt\right] +  P_*(D) \mathbb{E}^{\mathbb{Q}_i} \left[e^{-\lambda \tau_j} \right].
\end{aligned}
\end{equation}
Thus, for any $j \ge 1$,
\begin{equation}\label{lsc}
\begin{aligned}
P_*(D) \le \frac{P_*(D_j)- \mathbb{E}^{\mathbb{Q}_i}\left[\int_0^{\tau_j}e^{-\lambda t}D_t dt\right]}{\mathbb{E}^{\mathbb{Q}_i} \left[e^{-\lambda \tau_j}\right]}.
\end{aligned}
\end{equation}
Because $D_j \to D$ as $j \to \infty$, $\tau_j \to 0$ almost surely. Hence, we can apply the dominated convergence theorem and easily verify that 
$$\lim_{j \to \infty} \mathbb{E}^{\mathbb{Q}_i} \left[e^{-\lambda \tau_j }\right] = 1 \mbox{ and } \lim_{j \to \infty} \mathbb{E}^{\mathbb{Q}_i}\left[\int_0^{\tau_j}e^{-\lambda t}D_t dt\right] = 0.$$
Therefore, passing the limit in \eqref{lsc} proves that
$$P_*(D) \le \liminf_{j\to \infty} P_*(D_j)$$
and $P_*$ is lower semicontinuous.

Now, we show that $P_*(D)$ is upper semicontinuous. We fix any sequence $D_j \to D$, $i=1,2$, and consider a dividend stream $D_t$ with $D_0 = D$ such that it follows the beliefs of the $i$th investor. We define a stopping time $\tau_j'$ as
$$\tau_j' := \inf\left\{t\ge0 | D_t = D_j\right\}.$$
Similar to the previous reasoning,
\begin{equation}
\begin{aligned}
P_*(D) &= \max_{i=1,2} \sup_{\tau \ge 0} \mathbb{E}^{\mathbb{Q}_i}\left[\int_0^{\tau}e^{-\lambda t}D_t dt + e^{-\lambda \tau} P_*(D_{\tau})\right]
\ge \mathbb{E}^{\mathbb{Q}_i}\left[\int_0^{\tau_j'}e^{-\lambda t}D_t dt + e^{-\lambda \tau_j'} P_*(D_{\tau_j'})\right] \\
&= \mathbb{E}^{\mathbb{Q}_i}\left[\int_0^{\tau_j'}e^{-\lambda t}D_t dt\right] +  P_*(D_j) \mathbb{E}^{\mathbb{Q}_i}\left[e^{-\lambda \tau_j'}\right]
\end{aligned}
\end{equation}
As we take $j \to \infty$, $\tau_j' \to 0$, and we can see that 
$$P_*(D) \ge \limsup_{j\to \infty} P_*(D_j).$$
Therefore, $P_*$ is upper semicontinuous and our proof is complete.
\end{proof}

Following the notation from the proof of Theorem 3.1, we will denote $P_*(D)$ as the minimal equilibrium price. We end this section by defining the bubble size as the discrepancy between the minimal equilibrium price and the intrinsic price. In the next section, we investigate conditions for the existence of bubbles and relate the bubble size to a differential equation.

\begin{defi}
The bubble $B(D)$ is defined as $P_*(D) - I(D)$. We say that a bubble exists if $B(D) > 0$.
\end{defi}

\section{When does bubbles exist? A differential equation approach}
In Section 2, we mentioned that bubbles exist when $\theta_1 = \theta_2$ and disappear when $\kappa_1 = \kappa_2$. A natural question would be to determine the boundary conditions that lead to the formation of the bubble. In this section, we characterize equivalent conditions for bubbles to exist. First, we state a theorem that enables us to verify that a certain function is indeed an equilibrium price. Next, we provide equivalent conditions for the existence of bubbles. These conditions are highly intuitive, because they are directly related to the drift term of \eqref{D_t}. Finally, we show that the minimal equilibrium price is a viscosity supersolution of a differential equation.

\begin{thm}\label{thm4.1}
Consider the differential equation
\begin{equation}\label{de'}
\begin{aligned}
\max\{\kappa_1(\theta_1-D) \phi'+\frac{1}{2}\sigma_1^2D \phi'', \kappa_2(\theta_2-D) \phi'+\frac{1}{2}\sigma_2^2D \phi''\}-\lambda \phi + D = 0.
\end{aligned}
\end{equation}
Any positive $C^2$ solution $\phi(D)$ of \eqref{de'} with a bounded derivative is an equilibrium price.
\end{thm}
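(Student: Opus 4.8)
\medskip
\noindent\emph{Proof strategy.} The plan is a classical verification argument: I would show directly that such a $\phi$ satisfies the two requirements of Definition 2.2, the main tool being It\^o's formula combined with the equation \eqref{de'}.

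Write $\mathcal{L}_i\psi := \kappa_i(\theta_i-D)\psi' + \tfrac12\sigma_i^2 D\,\psi''$ for the generator of the dividend stream under $\mathbb{Q}_i$. Since \eqref{de'} reads $\max\{\mathcal{L}_1\phi,\mathcal{L}_2\phi\} = \lambda\phi - D$, it implies the one-sided bound $\mathcal{L}_i\phi(D)-\lambda\phi(D)+D\le 0$ for each $i=1,2$ and all $D>0$. Fixing $i$ and $D>0$, let $D_t$ be the CIR dynamics \eqref{D_t} under $\mathbb{Q}_i$ with $D_0=D$; by the standing condition $2\kappa_i\theta_i/\sigma_i^2\ge 1$ this stays strictly positive a.s., so $\phi$ is always evaluated on $\mathbb{R}^+$. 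Applying It\^o's formula to $e^{-\lambda t}\phi(D_t)$, the drift is $e^{-\lambda t}(\mathcal{L}_i\phi-\lambda\phi)(D_t)\le -e^{-\lambda t}D_t$ and the martingale part is $\int_0^t e^{-\lambda s}\sigma_i\sqrt{D_s}\,\phi'(D_s)\,dB_s$. Because $\phi'$ is bounded and \eqref{conditional} gives $\mathbb{E}^{\mathbb{Q}_i}[D_s\mid D_0=D]=De^{-\kappa_i s}+\theta_i(1-e^{-\kappa_i s})\le\max(D,\theta_i)$, the integrand is in $L^2$ on every $[0,T]$, so that part is a true martingale there. Hence for any stopping time $\tau$ and any $T>0$, stopping at $\tau\wedge T$ and taking expectations yields
\[
\mathbb{E}^{\mathbb{Q}_i}\!\left[\int_0^{\tau\wedge T} e^{-\lambda s}D_s\,ds + e^{-\lambda(\tau\wedge T)}\phi(D_{\tau\wedge T})\right]\le\phi(D).
\]

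I would then let $T\to\infty$: the integral increases to $\mathbb{E}^{\mathbb{Q}_i}[\int_0^\tau e^{-\lambda s}D_s\,ds]$ by monotone convergence, and since $\phi\ge0$, Fatou's lemma retains the nonnegative terminal term in the limit, so that $\mathbb{E}^{\mathbb{Q}_i}[\int_0^\tau e^{-\lambda s}D_s\,ds + e^{-\lambda\tau}\phi(D_\tau)]\le\phi(D)$, with the terminal contribution read as $0$ when $\tau=\infty$ (consistently with the linear growth $\phi(D)\le C(1+D)$ coming from the bounded derivative, and the uniform bound on $\mathbb{E}^{\mathbb{Q}_i}[D_T]$). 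Taking $\sup_\tau$ and then $\max_i$ gives $\max_i\sup_\tau(\cdots)\le\phi(D)$; the reverse inequality is immediate from the constant stopping time $\tau\equiv0$, for which the bracket equals $\phi(D)$. Thus $\phi(D)=\max_i\sup_\tau(\cdots)$, and by time-homogeneity of \eqref{D_t} and time-independence of $\phi$, the same identity holds starting from any $t>0$. Finally, discarding the nonnegative terminal term before passing $T\to\infty$ with $\tau=\infty$ gives $\mathbb{E}^{\mathbb{Q}_i}[\int_0^\infty e^{-\lambda s}D_s\,ds\mid D_0=D]\le\phi(D)$ for each $i$, hence $I(D)\le\phi(D)$ by \eqref{intrinsic formula}. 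Both conditions of Definition 2.2 are verified, so $\phi$ is an equilibrium price.

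The routine ingredients are It\^o's formula and the inequality read off from \eqref{de'}; the one place needing care is the integrability bookkeeping above — that the stochastic integral is a genuine, not merely local, martingale and that the limit $T\to\infty$ passes through — and this is exactly where the hypotheses are used. Boundedness of $\phi'$ supplies both the $L^2$ control of the martingale integrand (via the uniform-in-time bound on $\mathbb{E}^{\mathbb{Q}_i}[D_s]$ from \eqref{conditional}) and the linear growth of $\phi$, while positivity of $\phi$ licenses the Fatou step; beyond this I expect no serious obstacle.
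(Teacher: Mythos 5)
Your proposal is correct and follows essentially the same verification argument as the paper: It\^o's formula applied to $e^{-\lambda t}\phi(D_t)$, the one-sided drift bound $\mathcal{L}_i\phi-\lambda\phi+D\le 0$ read off from the max in \eqref{de'}, the choice $\tau\equiv 0$ for the reverse inequality, and positivity of $\phi$ plus monotone convergence for $\phi\ge I$. Your only departure is that you localize at $\tau\wedge T$ and justify the martingale property and the passage $T\to\infty$ via $L^2$ bounds and Fatou, which makes rigorous a step the paper asserts more briefly.
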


\begin{proof}
Suppose that $\phi(D)$ is a positive $C^2$ solution such that $\left| \phi'(D) \right| \le K$ for a positive constant $K$.
We show that $\phi(D)$ is an equilibrium price.
Fix any $i = 1, 2,$ and suppose that the dividend stream $D_t$ follows the belief of investor $i$.
By applying the Ito formula on $e^{-\lambda t}\phi(D_t)$,\\
\begin{equation}\label{eq1}
\begin{aligned}
d\left(e^{-\lambda t}\phi(D_t)\right)&= -\lambda e^{-\lambda t} \phi(D_t) dt + e^{-\lambda t}\phi'(D_t) dD_t + \frac{1}{2} e^{-\lambda t} \phi''(D_t) \left<dD\right>_t \\
&= e^{-\lambda t} \left[ \kappa_i(\theta_i - D_t) \phi'(D_t) + \frac{1}{2}\sigma^2 D_t \phi''(D_t) -\lambda \phi(D_t)\right]dt + \sigma e^{-\lambda t} \sqrt{D_t} \phi'(D_t)dB_t \\
&\le -e^{-\lambda t}D_tdt + \sigma e^{-\lambda t} \sqrt{D_t} \phi'(D_t)dB_t.
\end{aligned}
\end{equation}
For the final inequality, we used \eqref{de'}. Integrating \eqref{eq1} for any stopping time $\tau \ge 0$ gives
\begin{equation}
\begin{aligned}
\mathbb{E}^{\mathbb{Q}_i}\left[e^{-\lambda \tau} \phi(D_{\tau})\right] \le \phi(D_0) - \mathbb{E}^{\mathbb{Q}_i}\left[\int_0^{\tau}e^{-\lambda t}D_t dt\right] + \mathbb{E}^{\mathbb{Q}_i}\left[\int_0^{\tau} \sigma e^{-\lambda t} \sqrt{D_t} \phi'(D_t) dB_t\right].
\end{aligned}
\end{equation}
Note that the expectation of the stochastic integral is 0 because we assume that $\phi'$ is bounded. Hence, $\mathbb{E}^{\mathbb{Q}_i}\left[e^{-\lambda \tau} \phi(D_{\tau})\right] \le \phi(D_0) - \mathbb{E}^{\mathbb{Q}_i}\left[\int_0^{\tau}e^{-\lambda t}D_t dt\right]$ holds for any $i=1,2$ and $\tau \ge 0$. We can rewrite this as
\begin{equation}\label{condition2}
\begin{aligned}
\phi(D_0) \ge \max_{i=1,2} \sup_{\tau \ge 0} \mathbb{E}^{\mathbb{Q}_i}\left[\int_0^{\tau}e^{-\lambda t}D_t dt + e^{-\lambda \tau} \phi(D_{\tau})\right].
\end{aligned}
\end{equation}
Note that taking $\tau = 0$ on the right hand side of \eqref{condition2} shows that the inequality is actually an equality. Therefore, $\phi$ satisfies the second condition of the equilibrium price in Definition 2.2.

Now, we check that $\phi(D) \ge I(D)$.
As $\phi$ is positive, for $i=1,2$ and a positive integer $n$,
$$\phi(D_0) \ge \mathbb{E}^{\mathbb{Q}_{i}}\left[\int_0^{n}e^{-\lambda t}D_t dt\right].$$
By the monotone convergence theorem and the definition of $I(D_0)$, 
$$\phi(D_0) \ge \max_{i=1,2} \mathbb{E}^{\mathbb{Q}_{i}}\left[\int_0^{\infty}e^{-\lambda t}D_t dt\right] = I(D_0).$$
Therefore, $\phi$ also satisfies the first condition in Definition 2.2 and is an equilibrium price.
\end{proof}

Now, we use the aforementioned theorem to find the conditions for bubbles to exist. Our main intuition is that the drift term in the stochastic differential equation \eqref{D_t} is uniformly dominated by one group and the intrinsic value becomes an equilibrium price. 
We recall that $\kappa_1 \ge \kappa_2 > 0$ is assumed.

\begin{thm}\label{thm:main}
A bubble exists if and only if $\kappa_1 > \kappa_2$ and $\kappa_1 \theta_1 > \kappa_2 \theta_2$. 
\end{thm}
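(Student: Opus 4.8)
The plan is to prove the two implications separately, deducing the absence of bubbles from Theorem~\ref{thm4.1} and the presence of bubbles from the monotone sequence $(P_k)$ constructed in Section~3 (with $P_0=I$, $P_*=\lim_k P_k\ge P_1$). Throughout I write $I_i(D):=\tfrac{\theta_i}{\lambda}+\tfrac{D-\theta_i}{\lambda+\kappa_i}$, so that $I=\max(I_1,I_2)$ by \eqref{intrinsic formula}; each $I_i$ is affine and, by direct substitution, solves $\kappa_i(\theta_i-D)\psi'+\tfrac12\sigma_i^2D\psi''-\lambda\psi+D=0$, so by It\^o's formula $e^{-\lambda t}I_i(D_t)+\int_0^t e^{-\lambda s}D_s\,ds$ is a local martingale when $D$ evolves under $\mathbb{Q}_i$.

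For the ``only if'' part I argue by contraposition. Since $\kappa_1\ge\kappa_2$, the negation of the condition is ``$\kappa_1=\kappa_2$'' or ``$\kappa_1>\kappa_2$ and $\kappa_1\theta_1\le\kappa_2\theta_2$'', and in either case I will show $I$ itself is an equilibrium price; then $P_*\le I$, and with $P_*\ge I$ this gives $P_*=I$, i.e. $B\equiv0$. By Theorem~\ref{thm4.1} it suffices to check that $I$ is a positive $C^2$ solution of \eqref{de'} with bounded derivative. If $\kappa_1=\kappa_2=:\kappa$, then $I(D)=\tfrac{D}{\lambda+\kappa}+\tfrac{\kappa\max(\theta_1,\theta_2)}{\lambda(\lambda+\kappa)}$ is affine and substitution yields $\kappa_i(\theta_i-D)I'+\tfrac12\sigma_i^2DI''-\lambda I+D=\tfrac{\kappa(\theta_i-\max(\theta_1,\theta_2))}{\lambda+\kappa}\le0$, with equality for the maximizing $i$, so the left side of \eqref{de'} is $0$. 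If $\kappa_1>\kappa_2$ and $\kappa_1\theta_1\le\kappa_2\theta_2$, then $\kappa_1\theta_1\le\kappa_2\theta_2<\kappa_1\theta_2$ forces $\theta_1<\theta_2$, hence $\bar D<0$ and $I\equiv I_2$ on $\mathbb{R}^+$; there $\kappa_2(\theta_2-D)I_2'+\tfrac12\sigma_2^2DI_2''-\lambda I_2+D=0$ while $\kappa_1(\theta_1-D)I_2'+\tfrac12\sigma_1^2DI_2''-\lambda I_2+D=\tfrac{(\kappa_1\theta_1-\kappa_2\theta_2)-(\kappa_1-\kappa_2)D}{\lambda+\kappa_2}\le0$ for $D>0$, so again the left side of \eqref{de'} is $0$. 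Positivity of $I$ and boundedness of $I'$ are clear in both cases.

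For the ``if'' part, assume $\kappa_1>\kappa_2$ and $\kappa_1\theta_1>\kappa_2\theta_2$. It suffices to find $D_0>0$ with $P_1(D_0)>I(D_0)$. Put $g:=I_2-I_1$: $g$ is affine with $g'=\tfrac{\kappa_1-\kappa_2}{(\lambda+\kappa_1)(\lambda+\kappa_2)}>0$ and unique zero $\bar D$, and $I=I_2$ on $[\bar D,\infty)$. I always let $D$ run under $\mathbb{Q}_1$, so $P_1(D_0)\ge\mathbb{E}^{\mathbb{Q}_1}[\int_0^\tau e^{-\lambda s}D_s\,ds+e^{-\lambda\tau}I(D_\tau)]$ for any stopping time $\tau$, while the martingale above gives $\mathbb{E}^{\mathbb{Q}_1}[\int_0^\tau e^{-\lambda s}D_s\,ds+e^{-\lambda\tau}I_1(D_\tau)]=I_1(D_0)$. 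If $\bar D>0$: take $D_0=\bar D$ and $\tau=\inf\{s\ge0:D_s=\bar D+\varepsilon\}$ for a fixed small $\varepsilon>0$, which is a.s. finite since the CIR process is positive recurrent; then $D_\tau=\bar D+\varepsilon>\bar D$, so $I(D_\tau)=I_2(D_\tau)$, whence
\begin{equation}
P_1(\bar D)\ge\mathbb{E}^{\mathbb{Q}_1}\!\Big[\int_0^\tau e^{-\lambda s}D_s\,ds+e^{-\lambda\tau}I_2(D_\tau)\Big]=I_1(\bar D)+g(\bar D+\varepsilon)\,\mathbb{E}^{\mathbb{Q}_1}\!\big[e^{-\lambda\tau}\big]>I_1(\bar D)=I(\bar D).
\end{equation}
If $\bar D\le0$: then $I\equiv I_2$ on $\mathbb{R}^+$; set $D^*:=\tfrac{\kappa_1\theta_1-\kappa_2\theta_2}{\kappa_1-\kappa_2}>0$, fix any $D_0\in(0,D^*)$, and take $\tau=\inf\{s\ge0:D_s=D^*\}$ (a.s. finite). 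Then $P_1(D_0)\ge I_1(D_0)+\mathbb{E}^{\mathbb{Q}_1}[e^{-\lambda\tau}g(D_\tau)]$, and It\^o's formula for $e^{-\lambda t}g(D_t)$, using $\kappa_1(\theta_1-D)g'+\tfrac12\sigma_1^2Dg''-\lambda g=\tfrac{(\kappa_1\theta_1-\kappa_2\theta_2)-(\kappa_1-\kappa_2)D}{\lambda+\kappa_2}$, gives
\begin{equation}
\mathbb{E}^{\mathbb{Q}_1}\!\big[e^{-\lambda\tau}g(D_\tau)\big]=g(D_0)+\mathbb{E}^{\mathbb{Q}_1}\!\Big[\int_0^\tau e^{-\lambda s}\,\frac{(\kappa_1\theta_1-\kappa_2\theta_2)-(\kappa_1-\kappa_2)D_s}{\lambda+\kappa_2}\,ds\Big]>g(D_0),
\end{equation}
since $D_s<D^*$ on $[0,\tau)$ makes the integrand strictly positive and $\tau>0$ a.s.; hence $P_1(D_0)>I_1(D_0)+g(D_0)=I_2(D_0)=I(D_0)$. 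In both cases $P_*(D_0)\ge P_1(D_0)>I(D_0)$, so a bubble exists.

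The main obstacle I anticipate is the case $\bar D\le0$: positivity of $g$ alone is useless there (one already has $g(D_0)>0$), and the real content is that the resale gain $\mathbb{E}^{\mathbb{Q}_1}[e^{-\lambda\tau}g(D_\tau)]$ \emph{strictly exceeds} $g(D_0)$ --- precisely where the hypothesis $\kappa_1\theta_1>\kappa_2\theta_2$ enters, through the sign of $\kappa_1(\theta_1-D)g'+\tfrac12\sigma_1^2Dg''-\lambda g$ on $(0,D^*)$ (this quantity is $\le0$ everywhere exactly when $\kappa_1\theta_1\le\kappa_2\theta_2$, dovetailing with the first part). A secondary, routine point is making the It\^o/optional-stopping identities rigorous at the unbounded hitting time $\tau$: one localizes at $\tau\wedge n$ and passes to the limit, using that $D$ stays bounded on $[0,\tau)$ and $\int_0^\infty e^{-\lambda s}D_s\,ds\in L^1(\mathbb{Q}_i)$.
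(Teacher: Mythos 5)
Your proof is correct, and while the ``no bubble'' direction coincides with the paper's argument (verify that $I$ is a positive $C^2$ solution of \eqref{de'} with bounded derivative and invoke Theorem~\ref{thm4.1}), the ``bubble'' direction takes a genuinely different route. The paper works with \emph{fixed-time} resale strategies: when $\bar D>0$ it takes $\tau=1$ and gets strictness from a Jensen-type argument ($I=\max(I_1,I_2)$ is strictly above each affine branch in expectation because $D_1$ charges both sides of $\bar D$); when $\bar D\le0$ it computes the time-$t$ valuation explicitly and extracts a threshold curve $D^*(t)$, followed by a propagation step under $\mathbb{Q}_2$. You instead use \emph{level hitting times} and apply It\^o's formula to $g=I_2-I_1$, reducing everything to the sign of $\kappa_1(\theta_1-D)g'-\lambda g=\tfrac{(\kappa_1\theta_1-\kappa_2\theta_2)-(\kappa_1-\kappa_2)D}{\lambda+\kappa_2}$ on $(0,\tilde D)$. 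This is arguably cleaner: it makes the role of the hypothesis $\kappa_1\theta_1>\kappa_2\theta_2$ completely transparent (it is exactly what makes this drift positive somewhere on $\mathbb{R}^+$), and it unifies the two sub-cases around the single function $g$, whereas the paper's $\bar D\le 0$ computation is more opaque. Your computations check out, including the localization remarks needed to justify optional stopping at the unbounded hitting times.

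The one discrepancy worth flagging is a quantifier: the paper's proof establishes $B(D)>0$ for \emph{every} $D>0$ under the stated conditions, whereas your argument produces the strict inequality only at selected points ($D_0=\bar D$ in the first sub-case, $D_0\in(0,\tilde D)$ in the second). If ``a bubble exists'' is read as ``$B(D)>0$ for some $D$'' your proof is complete; if it is read pointwise for all $D$, you need one more (routine) propagation step, which your framework already supplies: having shown $P_*>I$ on an open interval $U$, take any $D>0$, the index $i$ attaining $I(D)=I_i(D)$, and a fixed time $t>0$; then $P_*(D)\ge\mathbb{E}^{\mathbb{Q}_i}[\int_0^t e^{-\lambda s}D_s\,ds+e^{-\lambda t}I_i(D_t)]+e^{-\lambda t}\,\mathbb{E}^{\mathbb{Q}_i}[(P_*-I)(D_t)\mathbf{1}_{\{D_t\in U\}}]=I(D)+(\text{positive})$, since $P_*-I$ is continuous and positive on $U$ and $D_t$ has full support on $\mathbb{R}^+$. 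This is essentially the final display of the paper's proof.
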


\begin{proof}
To begin with, we consider the case where $\kappa_1 = \kappa_2$ or $\kappa_1 \theta_1 \le \kappa_2 \theta_2$ and prove that there is no bubble. First, we consider the case when $\kappa := \kappa_1 = \kappa_2$ and show that $I(D)$ is an equilibrium price. Without loss of generality, we can additionally assume that $\theta_1 \ge \theta_2$. Recall that we calculated $I(D)$ in section 2 as
$$I(D) = \frac{D}{\lambda+\kappa}+\frac{\theta_1 \kappa}{\lambda(\lambda+\kappa)}.$$
Then $I$ is twice differentiable and
$$\kappa_i(\theta_i-D) I' + \frac{1}{2}\sigma_i^2D I'' = \frac{\kappa(\theta_i - D)}{\lambda+\kappa}$$
for $i=1,2$, and the maximum in \eqref{de'} is achieved at $i=1$. Hence, $I(D)$ is a solution of equation \eqref{de'}. Because $I(D)$ has a constant derivative, we can apply Theorem 4.1 and $I(D)$ is an equilibrium price. Therefore, $I(D)$ becomes the minimal equilibrium price and there is no bubble.

Next, we consider the case when $\kappa_1 \theta_1 \le \kappa_2 \theta_2$. Because the case $\kappa_1 = \kappa_2$ was discussed in the previous paragraph, we can assume that $\kappa_1 > \kappa_2$. Under these assumptions, we can check that $\bar{D} < 0$ (recall that we defined $\bar{D}$ as $\frac{\kappa_1 \kappa_2 (\theta_1 - \theta_2) + \lambda(\kappa_1 \theta_1 - \kappa_2 \theta_2)}{\lambda(\kappa_1 - \kappa_2)}$ in Section 2). Then, the intrinsic value can be written as
$$I(D) = \frac{D}{\lambda+\kappa_2}+\frac{\theta_2 \kappa_2 }{\lambda(\lambda+\kappa_2)}$$
for all $D>0$. Similar to the previous claim, the maximum in \eqref{de'} is always achieved at $i=2$ and $I(D)$ becomes a solution of the differential equation. Theorem 4.1 verifies that $I(D)$ is the minimal equilibrium price and there is no bubble.

Now, we prove that $B(D)>0$ when $\kappa_1 > \kappa_2$ and $\kappa_1 \theta_1 > \kappa_2 \theta_2$. First, we consider the case when $\Bar{D}>0$ and show that $P_*(D) > I(D)$. Using the fact that $P_*$ is an equilibrium price, we get 
\begin{equation}
\begin{aligned}
P_*(D) &\ge \max_{i=1,2} \sup_{\tau \ge 0} \mathbb{E}^{\mathbb{Q}_i}\left[\int_0^{\tau}e^{-\lambda t}D_t dt + e^{-\lambda \tau} I(D_{\tau}) | D_0 = D \right] \\
&\ge \max_{i=1,2} \mathbb{E}^{\mathbb{Q}_i}\left[\int_0^{1}e^{ -\lambda t}D_t dt + e^{-\lambda} I(D_{1}) | D_0 = D \right] \\
&> \max_{i=1,2} \mathbb{E}^{\mathbb{Q}_i}\left[ \int_0^{\infty} e^{-\lambda t}D_t dt | D_0 = D\right] = I(D).
\end{aligned}
\end{equation}
\noindent For the first inequality, we consider a constant stopping time $\tau = 1$. The final inequality holds because events $\{D_1 \le \bar{D} \}$ and $\{D_1 > \bar{D} \}$ both have positive probability and 
\begin{equation}
\begin{aligned}
\mathbb{E}^{\mathbb{Q}_i}\left[ I(D_1) | D_0 = D \right] &> \mathbb{E}^{\mathbb{Q}_i} \left[ \frac{\theta_i}{\lambda}+\frac{D_1-\theta_i}{\lambda + \kappa_i} | D_0 = D \right] \\
&= \frac{\theta_i}{\lambda}+\frac{\mathbb{E}^{\mathbb{Q}_i} \left[ D_1  | D_0 = D  \right]-\theta_i}{\lambda+\kappa_i} \\
&= \frac{\kappa_i \theta_i}{\lambda(\lambda+\kappa_i)} + \frac{(D_0-\theta_i) e^{-\kappa_i}}{\lambda + \kappa_i} \\
&= e^{\lambda} \mathbb{E}^{\mathbb{Q}_i} \left[\int_1^{\infty} e^{-\lambda t}D_t dt | D_0 = D \right].
\end{aligned}
\end{equation}

Finally, we consider the case when $\kappa_1 \theta_1 > \kappa_2 \theta_2$ and $\bar{D} \le 0$. Note that the intrinsic value becomes $I(D) = \frac{D}{\lambda + \kappa_2} + \frac{\theta_2\kappa_2}{\lambda (\lambda+\kappa_2)}$. Because $P_*$ is an equilibrium price, \\
\begin{equation}
\begin{aligned}
P_*(D) &\ge \mathbb{E}^{\mathbb{Q}_1} \left[ \int_0^{t} e^{-\lambda s}D_s ds + e^{-\lambda t} P_*(D_t) | D_0 = D \right] \\
&\ge \mathbb{E}^{\mathbb{Q}_1} \left[ \int_0^{t} e^{-\lambda s}D_s ds + e^{-\lambda t} I(D_t) | D_0 = D \right] \\
&=\int_0^t e^{-\lambda s} \mathbb{E}^{\mathbb{Q}_1} \left[D_s | D_0=D \right] ds + e^{-\lambda t} \mathbb{E}^{\mathbb{Q}_1} \left[ I(D_t) | D_0=D \right] \\
&=(D-\theta_1)\frac{1-e^{-(\lambda+\kappa_1)t}}{\lambda+\kappa_1}+\theta_1\frac{1-e^{-\lambda t}}{\lambda} +(D-\theta_1) \frac{e^{-(\lambda+\kappa_1)t}}{\lambda+\kappa_2}  + \frac{(\lambda \theta_1 + \kappa_2 \theta_2)e^{-\lambda t}}{\lambda(\lambda+\kappa_2)}
\end{aligned}
\end{equation}
for any $t>0$. Here, we used the conditional expectation formula  \eqref{conditional}.

Define $D^*(t) := \theta_1 - \frac{\kappa_2(\kappa_1 + \lambda)(\theta_2 - \theta_1)}{\lambda(\kappa_1 - \kappa_2)} \frac{1-e^{-\lambda t}}{1- e^{-(\lambda + \kappa_1)t}}$. We first note that there exists $t^* > 0$ such that $D^*(t^*) > 0.$ This claim holds because $\lim_{t \to 0} D^*(t) = \tilde{D} > 0$. Here, the assumption that $\kappa_1 \theta_1 > \kappa_2 \theta_2$ is crucial, as the existence of such $t^*$ is not guaranteed otherwise. (Indeed, for the case when $\bar{D}<0$ but $\kappa_1 \theta_1 < \kappa_2 \theta_2,$ one can show that $D^*(t) <0$ for all $t$, and the following proof would not be applicable.)

Now, we fix such $t^*$. 
For $D < D^{*}(t^*)$,
$$(D-\theta_1)\frac{1-e^{-(\lambda+\kappa_1)t^*}}{\lambda+\kappa_1}+\theta_1\frac{1-e^{-\lambda t^*}}{\lambda} +(D-\theta_1) \frac{e^{-(\lambda+\kappa_1)t^*}}{\lambda+\kappa_2}  + \frac{(\lambda \theta_1 + \kappa_2 \theta_2)e^{-\lambda t^*}}{\lambda(\lambda+\kappa_2)} > I(D),$$
or equivalently
$$(\theta_1 - D) \left[ \frac{1-e^{-(\lambda+\kappa_1) t^*}}{\lambda+\kappa_2} - \frac{1-e^{-(\lambda+\kappa_1) t^*}}{\lambda+\kappa_1}\right] > \frac{\kappa_2(\theta_2 - \theta_1)}{\lambda (\lambda+\kappa_2)} (1- e^{-\lambda t^*})$$
holds. This result implies that $P_*(D_t) > I(D_t)$ holds for $D_t < D^*(t^*)$, which happens with positive probability. Therefore,
\begin{equation}
\begin{aligned}
P_*(D) &\ge \mathbb{E}^{\mathbb{Q}_2} \left[ \int_0^{t} e^{-\lambda s}D_s ds + e^{-\lambda t} P_*(D_t) | D_0 = D \right] \\
&> \mathbb{E}^{\mathbb{Q}_2} \left[ \int_0^{t} e^{-\lambda s}D_s ds + e^{-\lambda t} I(D_t) | D_0 = D \right] \\
&= I(D)
\end{aligned}
\end{equation}
for any $D > 0$. For the final equality, we used the definition of $I(D)$ and the law of iterated expectations.
\end{proof}

Theorem 4.2 shows that there is no price bubble if one group of investor dominates the drift term $\kappa_i(\theta_i - D)$ in \eqref{D_t} uniformly in $D$. If $\kappa_1 = \kappa_2$, the group with a higher $\theta$ has a strictly dominating drift term and there are no transactions. If $\kappa_1 \theta_1 \le \kappa_2 \theta_2$, then $\kappa_1 (\theta_1 - D) < \kappa_2 (\theta_2 - D)$ holds for all $D>0$ and the second group always values the asset highly as well. As $\sigma_i$ is a parameter on the diffusion term, it has no effect on the expected asset value and has no impact on the conditions in Theorem 4.2.

Thus, we have obtained theoretical conditions that lead to a bubble. Even though the results are surprisingly simple and elegant, in practice, the size of the bubble is also important. It would be ideal to find the minimal equilibrium price in a closed form and calculate its value explicitly. It is intuitive that the minimal equilibrium price is a particular solution of the differential equation \eqref{de'}. Unfortunately, the non-linear structure of \eqref{de'} makes it difficult to find a classical solution. Furthermore, there is no guarantee of the differentiability of the minimal equilibrium price (we only know that it is continuous). Hence, we use a weaker concept of solutions to find the connection. We characterize the minimal equilibrium price as a viscosity supersolution, which is presented in the following theorem. Note that this is a general result independent of the existence of the bubble.

\begin{thm}
The minimal equilibrium price $P_*$ is a viscosity supersolution of the differential equation
\begin{equation}\label{visc}
\begin{aligned}
-\max\{\kappa_1(\theta_1-D)\phi' + \frac{1}{2}\sigma_1^2 D \phi'', \kappa_2(\theta_2-D) \phi' + \frac{1}{2}\sigma_2^2 D \phi'' \}+\lambda \phi - D = 0.
\end{aligned}
\end{equation}
\end{thm}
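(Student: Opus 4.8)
The plan is to verify the defining inequality of a viscosity supersolution directly from the fact that $P_*$ is an equilibrium price, using only the continuity established in Theorem 3.1 together with a standard localization–plus–It\^o argument; no uniqueness or comparison machinery is needed. Recall from Definition 2.2 that, since $P_*$ is an equilibrium price, for every $D>0$, every $i\in\{1,2\}$ and every stopping time $\tau\ge 0$,
\begin{equation*}
P_*(D)\ \ge\ \mathbb{E}^{\mathbb{Q}_i}\left[\int_0^{\tau}e^{-\lambda s}D_s\,ds+e^{-\lambda\tau}P_*(D_\tau)\ \Big|\ D_0=D\right],
\end{equation*}
which is the only consequence of the equilibrium property we will use (it is the ``easy half'' of the dynamic programming principle). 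Fix $D>0$ and a test function $\psi\in C^2$ such that $P_*-\psi$ has a local minimum at $D$; after adding a constant we may assume $\psi(D)=P_*(D)$, so that $P_*\ge\psi$ on a closed ball $\overline{B_r(D)}\subset(0,\infty)$ for some $0<r<D$. Write $\mathcal{L}_i\phi:=\kappa_i(\theta_i-D)\phi'+\tfrac12\sigma_i^2 D\,\phi''$ for the generator associated with $\mathbb{Q}_i$, and set $g_i(x):=\mathcal{L}_i\psi(x)-\lambda\psi(x)+x$.

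Next, for each $i\in\{1,2\}$ let $D_t$ be the dividend stream under the belief of group $i$ with $D_0=D$, let $\tau_r:=\inf\{t\ge 0:D_t\notin B_r(D)\}$ (which is almost surely strictly positive because $D$ is interior and the CIR paths are continuous), and apply the equilibrium inequality with the stopping time $\tau=\tau_r\wedge h$ for a small $h>0$. Since $D_{\tau_r\wedge h}\in\overline{B_r(D)}$, where $P_*\ge\psi$, this yields
\begin{equation*}
\psi(D)=P_*(D)\ \ge\ \mathbb{E}^{\mathbb{Q}_i}\left[\int_0^{\tau_r\wedge h}e^{-\lambda s}D_s\,ds+e^{-\lambda(\tau_r\wedge h)}\psi(D_{\tau_r\wedge h})\right].
\end{equation*}
Applying It\^o's formula to $e^{-\lambda t}\psi(D_t)$, exactly as in the proof of Theorem \ref{thm4.1}, and using that $\psi'$ is bounded on the compact set $\overline{B_r(D)}$ (so the stopped stochastic integral has zero mean), the right-hand side equals $\psi(D)+\mathbb{E}^{\mathbb{Q}_i}\big[\int_0^{\tau_r\wedge h}e^{-\lambda s}\,g_i(D_s)\,ds\big]$. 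Cancelling $\psi(D)$ leaves
\begin{equation*}
0\ \ge\ \mathbb{E}^{\mathbb{Q}_i}\left[\int_0^{\tau_r\wedge h}e^{-\lambda s}\,g_i(D_s)\,ds\right]\qquad\text{for both }i=1,2.
\end{equation*}

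Finally, divide by $h$ and let $h\downarrow 0$. By path continuity, $\mathbb{P}(\tau_r>h)\to 1$, so on that event $\tfrac1h\int_0^{\tau_r\wedge h}e^{-\lambda s}g_i(D_s)\,ds=\tfrac1h\int_0^{h}e^{-\lambda s}g_i(D_s)\,ds\to g_i(D)$ almost surely by continuity of $s\mapsto e^{-\lambda s}g_i(D_s)$ at $s=0$, while on the complementary event this quantity is bounded in absolute value by $\sup_{\overline{B_r(D)}}|g_i|$; bounded convergence then gives $g_i(D)\le 0$, i.e.\ $\mathcal{L}_i\psi(D)-\lambda\psi(D)+D\le 0$, for both $i$. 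Taking the maximum over $i$ gives $\max\{\mathcal{L}_1\psi(D),\mathcal{L}_2\psi(D)\}-\lambda\psi(D)+D\le 0$, which is precisely the viscosity supersolution inequality for \eqref{visc} at $D$, and since $\psi$ was an arbitrary test function touching $P_*$ from below, the proof is complete. The step I expect to need the most care — essentially the only nonroutine point — is this last passage to the limit: justifying $\tfrac1h\mathbb{E}^{\mathbb{Q}_i}[\int_0^{\tau_r\wedge h}e^{-\lambda s}g_i(D_s)\,ds]\to g_i(D)$ relies on the almost sure positivity of the exit time $\tau_r$, continuity of the CIR paths, and boundedness of $g_i$ on the compact ball $\overline{B_r(D)}$; the surrounding It\^o computation and the vanishing of the martingale term are entirely analogous to the proof of Theorem \ref{thm4.1}.
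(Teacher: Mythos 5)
Your proof is correct and follows essentially the same route as the paper's: use the supermartingale-type inequality from the equilibrium property with a stopping time localized both in space and in time, apply It\^o's formula to $e^{-\lambda t}\psi(D_t)$, divide by the small time increment, and pass to the limit to obtain $\mathcal{L}_i\psi(D)-\lambda\psi(D)+D\le 0$ for each $i$ before taking the maximum. The only (immaterial) differences are that you work with a local test function on a ball $\overline{B_r(D)}$ where the paper assumes $\psi\le P_*$ globally with the exit radius fixed at $1$, and that you spell out the bounded-convergence step on the event $\{\tau_r\le h\}$ slightly more explicitly.
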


\begin{proof}
Take any $i=1,2$, a positive value $D_0$, and a $C^2$ function $\psi$ such that $\psi(D_0)= P_*(D_0)$ and $\psi(D) \le P_*(D)$ for all $D$. We define a dividend stream $D_t$ with the initial value $D_0$ that follows the beliefs of group $i$. Note that we are abusing notations, as $D_0$ also becomes the time 0 value of $D_t$. We also define a sequence of stopping times $\tau_n:= \frac{1}{n} \wedge \inf \{ s>0 : \lvert D_s - D_0 \rvert \ge 1 \}$ for every positive integer $n$. We apply the Ito formula on $e^{-\lambda t} \psi(D_t)$ (the expression is given in the first line of \eqref{eq1}) and integrate from 0 to $\tau_n$ to obtain
$$\mathbb{E}^{\mathbb{Q}_i}\left[e^{-\lambda \tau_n} \psi(D_{\tau_n})\right] = \psi(D_0) + \mathbb{E}^{\mathbb{Q}_i}\left[\int_0^{\tau_n} e^{-\lambda t} \left( \kappa_i(\theta_i - D_t) \psi' + \frac{1}{2}\sigma_i^2 D_t \psi'' -\lambda \psi(D_t)\right)dt \right].$$
As $\psi(D_0) = P_*(D_0) \ge \mathbb{E}^{\mathbb{Q}_i}\left[\int_0^{\tau_n} e^{-\lambda t} D_t dt + e^{-\lambda \tau_n}P_*(D_{\tau_n}) \right]$, the aforementioned equation can be written as
$$\mathbb{E}^{\mathbb{Q}_i}\left[e^{-\lambda \tau_n}\left( \psi(D_{\tau_n}) - P_*(D_{\tau_n}) \right) \right] \ge \mathbb{E}^{\mathbb{Q}_i}\left[ \int_0^{\tau_n} e^{-\lambda t} \left( \kappa_i(\theta_i - D_t) \psi' + \frac{1}{2}\sigma_i^2 D_t \psi'' -\lambda \psi(D_t) + D_t \right)dt \right].$$
Finally, we divide the aforementioned equation by $\frac{1}{n}$ and take the limit $n \to \infty$. The left hand side is less or equal to 0 by our assumption that $\psi \le P_*$. The term $\kappa_i(\theta_i - D_t) \psi' + \frac{1}{2}\sigma_i^2 D_t \psi'' -\lambda \psi(D_t) + D_t$ can be bounded by a constant independent of $n$, and we can pass the limit inside the expectation by the dominated convergence theorem. As $n \to \infty$, $\tau_n \to 0$ and
$$ \kappa_i(\theta_i-D_0)\psi' + \frac{1}{2}\sigma_i^2 D_0 \psi''- \lambda \psi(D_0) + D_0 \le 0.$$
As this is true for all $i=1, 2$, $$ -\max\{\kappa_1(\theta_1-D_0)\psi' + \frac{1}{2}\sigma_1^2 D_0 \psi'', \kappa_2(\theta_2-D_0)\psi' + \frac{1}{2}\sigma_2^2 D_0 \psi'' \}+\lambda P_*(D_0) - D_0 \ge 0$$
and $P_*$ is a viscosity supersolution of \eqref{visc}. Note that we have used the fact that $P_*$ is continuous here.
\end{proof}

Theorem 4.3 classifies the minimal equilibrium price $P_*$ as a viscosity supersolution of \eqref{visc}. One can notice that the proof of the theorem did not use the minimality of $P_*$. In fact, we can show that any lowersemicontinuous equilibrium price $Q(D)$ is also a viscosity supersolution of \eqref{visc}. This could be interpreted as a converse of Theorem 4.1. Additionally, if $P_*$ is a $C^2$ function, we can easily prove that it becomes a classical supersolution of \eqref{visc}. In fact, we can prove that every time-independent equilibrium price that is a $C^2$ function becomes a classical supersolution of \eqref{visc}. The proof follows directly from the Ito formula and we omit the details.

\section{Calculating the bubble size when investors agree on volatility}

In this section, we consider a special case when $\sigma := \sigma_1 = \sigma_2$, i.e. the investors agree on the volatility of the asset. In this case, the differential equation \eqref{de'} can be simplified into the form
\begin{equation}\label{de}
\begin{aligned}
\max\{\kappa_1(\theta_1-D), \kappa_2(\theta_2-D)\}\phi'+\frac{1}{2}\sigma^2D \phi''-\lambda \phi + D = 0.
\end{aligned}
\end{equation}
We can see that the nonlinear part is much simpler. Hence, it is easier to find classical solutions in this form. In particular, we can prove that a special classical solution can be identified as the minimal equilibrium price.

Before we elaborate on the details, we define a threshold value of $D$
$$\tilde{D}:=\frac{\kappa_1 \theta_1 - \kappa_2 \theta_2}{\kappa_1-\kappa_2},$$
under the assumption that $\kappa_1 \neq \kappa_2$. This notation will be universally used throughout this section. Note that $\tilde{D}$ is the value where $\kappa_1(\theta_1-D) = \kappa_2(\theta_2-D)$ and that 
\begin{equation}\label{tildeD}
\begin{aligned}
\max\{\kappa_1(\theta_1-D), \kappa_2(\theta_2-D)\} = 
\begin{cases}
\kappa_1(\theta_1-D) & \text{if } D<\tilde{D}\\
\kappa_2(\theta_2-D) & \text{if } D>\tilde{D}\\
\end{cases}.
\end{aligned}
\end{equation}
We can easily check that $\bar{D}$ and $\tilde{D}$ are closely related by the formula
$\bar{D} = \tilde{D} + \frac{\kappa_1 \kappa_2 (\theta_1 - \theta_2)}{\lambda (\kappa_1 - \kappa_2)}.$

Recall that we proved bubbles exist if and only if $\kappa_1 > \kappa_2$ and $\kappa_1 \theta_1 > \kappa_2 \theta_2$. We consider two sub-cases and calculate the corresponding bubble size. For the first case, we explicitly calculate the bubble size in terms of confluent hypergeometric functions. For the second case, we provide a lower bound of the bubble.

\subsection{Case 1: \texorpdfstring{$\theta_1 \le \theta_2$}{Case 1: theta 1 < theta 2}}
First, we consider the case with $\kappa_1 > \kappa_2, \theta_1 \le \theta_2, \kappa_1 \theta_1 \ge \kappa_2\theta_2$. The latter two conditions can be rewritten as $\frac{\kappa_2}{\kappa_1}\theta_2 \le \theta_1 \le \theta_2$. Our first step is to find classical solutions of \eqref{de}. In particular, we identify a specific solution that coincides with our intuition about the asset price. Trivially, the price of an asset whose current dividend is 0 must have a finite value, and it is also plausible that the price has an asymptotically linear relationship as the current dividend rate diverges to infinity. We set these beliefs as boundary conditions for \eqref{de} and prove that there is a unique solution.

\begin{thm}
$\Phi(D)$ defined as 
\begin{equation}
\begin{aligned}
\Phi(D) := \begin{cases}
\frac{D}{\lambda+\kappa_1} + \frac{\theta_1 \kappa_1}{\lambda(\lambda+\kappa_1)} + Em(D) & \text{if } D<\tilde{D}\\
\frac{D}{\lambda+\kappa_2} + \frac{\theta_2 \kappa_2}{\lambda(\lambda+\kappa_2)} + Fu(D) & \text{if } D>\tilde{D}\\
\end{cases}
\end{aligned}
\end{equation}
is the unique positive $C^2$ solution of \eqref{de} such that has a finite value at 0 and has a linear growth as $D \to\ \infty$. \\
Here $m(D):=M(a_1,b_1,x_1), u(D):=U(a_2,b_2,x_2)$, where $a_i=\frac{\lambda}{\kappa_i}, b_i=\frac{2\kappa_i\theta_i}{\sigma^2}$, $x_i=\frac{2\kappa_i D}{\sigma^2}$ for $i=1, 2$. $M$ and $U$ are confluent hypergeometric functions of the first and second type. \\
E and F are positive constants defined in the Appendix.
\end{thm}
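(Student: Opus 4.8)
The plan is to exploit the structure recorded in \eqref{tildeD}: on each of the two intervals $(0,\tilde{D})$ and $(\tilde{D},\infty)$ the max in \eqref{de} is attained by a single branch, so \eqref{de} becomes the \emph{linear} inhomogeneous second-order ODE $\kappa_i(\theta_i-D)\phi'+\tfrac12\sigma^2 D\phi''-\lambda\phi+D=0$, with $i=1$ on $(0,\tilde{D})$ and $i=2$ on $(\tilde{D},\infty)$. First I would record that $g_i(D):=\frac{D}{\lambda+\kappa_i}+\frac{\theta_i\kappa_i}{\lambda(\lambda+\kappa_i)}$ is a particular (affine) solution -- note this is precisely the $i$-th branch of the intrinsic value appearing in the definition of $\Phi$. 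Next I would reduce the homogeneous equation to Kummer's equation $x\,y''+(b_i-x)y'-a_iy=0$ via the substitution $x=x_i=\tfrac{2\kappa_i D}{\sigma^2}$; this is a routine change of variables giving $a_i=\lambda/\kappa_i$ and $b_i=2\kappa_i\theta_i/\sigma^2$, so $\{M(a_i,b_i,x_i),\,U(a_i,b_i,x_i)\}$ is a fundamental system, and the general $C^2$ solution of \eqref{de} on $(0,\tilde{D})$ is $g_1+c_1 m+c_1' U(a_1,b_1,x_1)$ and on $(\tilde{D},\infty)$ is $g_2+c_2 u+c_2' M(a_2,b_2,x_2)$.

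Then I would impose the two boundary requirements. Since the standing assumption $2\kappa_i\theta_i/\sigma^2\ge 1$ gives $b_i\ge 1$, the function $U(a_i,b_i,x_i)$ diverges as $D\downarrow 0$ (like $x^{1-b_i}$ when $b_i>1$, logarithmically when $b_i=1$), whereas $g_1$ and $m$ stay bounded there; hence finiteness at $0$ forces $c_1'=0$, leaving $g_1+Em$ on $(0,\tilde{D})$. On the other side $M(a_2,b_2,x_2)$ grows like $e^{x_2}$ while $U(a_2,b_2,x_2)$ decays like $x_2^{-a_2}$, so the linear-growth condition forces $c_2'=0$, leaving $g_2+Fu$ on $(\tilde{D},\infty)$. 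It remains to glue the pieces: requiring $\Phi$ and $\Phi'$ to be continuous at $\tilde{D}$ is a $2\times2$ linear system in $(E,F)$ whose coefficient determinant is $W=m'(\tilde{D})u(\tilde{D})-m(\tilde{D})u'(\tilde{D})$, which is strictly positive because $M$ is increasing and $U$ decreasing in their last argument (so $m,m',u>0$ and $u'<0$); thus $(E,F)$ is uniquely determined. Matching of $\Phi''$ is then automatic: at $D=\tilde{D}$ one has $\kappa_1(\theta_1-\tilde{D})=\kappa_2(\theta_2-\tilde{D})$, so both one-sided limits of $\phi''$ are forced by \eqref{de} to equal $\frac{2}{\sigma^2\tilde{D}}\bigl(\lambda\Phi(\tilde{D})-\tilde{D}-\kappa_1(\theta_1-\tilde{D})\Phi'(\tilde{D})\bigr)$, so $\Phi\in C^2$ and solves \eqref{de}.

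For positivity of $\Phi$ it suffices to prove $E>0$ and $F>0$, since $g_1,g_2,m,u>0$ on $\mathbb{R}^+$. Writing $\delta_0:=g_2(\tilde{D})-g_1(\tilde{D})$ and $\delta_1:=g_2'(\tilde{D})-g_1'(\tilde{D})=\frac{\kappa_1-\kappa_2}{(\lambda+\kappa_1)(\lambda+\kappa_2)}>0$, the affine identity $\delta_0=\delta_1(\tilde{D}-\bar{D})$ together with $\tilde{D}\ge\bar{D}$ in Case 1 (recall $\bar{D}=\tilde{D}+\frac{\kappa_1\kappa_2(\theta_1-\theta_2)}{\lambda(\kappa_1-\kappa_2)}$ and $\theta_1\le\theta_2$) gives $\delta_0\ge 0$. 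Cramer's rule then yields $E=\bigl(\delta_1 u(\tilde{D})-\delta_0 u'(\tilde{D})\bigr)/W>0$ immediately, both terms being nonnegative and the first strictly positive. The genuinely delicate step is $F=\bigl(m(\tilde{D})\delta_1-m'(\tilde{D})\delta_0\bigr)/W>0$, equivalently the inequality $\frac{m'(\tilde{D})}{m(\tilde{D})}\,(\tilde{D}-\bar{D})<1$; I expect this to be the main obstacle, and it is exactly what the appendix computation of $E,F$ is for -- I would handle it with contiguous/logarithmic-derivative estimates for $M$ (bounds on $M(a+1,b+1,x)/M(a,b,x)$). Finally, uniqueness follows from the same linear-ODE analysis: any positive $C^2$ solution with the prescribed behaviour at $0$ and $\infty$ must equal $g_1+c_1m$ on $(0,\tilde{D})$ and $g_2+c_2u$ on $(\tilde{D},\infty)$, and $C^1$-matching forces $(c_1,c_2)=(E,F)$, so it coincides with $\Phi$. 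One minor point to dispose of separately is the degenerate case $\kappa_1\theta_1=\kappa_2\theta_2$, i.e.\ $\tilde{D}=0$, where the left piece is vacuous and $\Phi=g_2+Fu$ on all of $\mathbb{R}^+$.
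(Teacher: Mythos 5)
Your proposal is correct and follows essentially the same route as the paper: reduce \eqref{de} to a linear ODE on each side of $\tilde{D}$, subtract the affine particular solution to land on Kummer's equation, kill the $U$-branch at $0$ (using $b_1\ge 1$) and the $M$-branch at $\infty$, glue with a $2\times 2$ system whose determinant is positive by the signs of $m,m',u,u'$, and note that the second-derivative matching is automatic from the equation at $\tilde{D}$. The one step you leave unexecuted --- the bound $\tfrac{m'(\tilde{D})}{m(\tilde{D})}(\tilde{D}-\bar{D})<1$ giving $F>0$ --- is exactly the inequality $\tfrac{M(a_1,b_1,x_1)}{M(a_1+1,b_1+1,x_1)}\ge \tfrac{\kappa_2(\theta_2-\theta_1)}{(\kappa_1-\kappa_2)\theta_1}$ that the paper defers to its Appendix and proves there via a continued-fraction expansion of that ratio, so you have correctly identified both the genuine difficulty and the right family of tools.
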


\begin{proof}
First, we solve the differential equation \eqref{de}. We begin by considering the case when $D \le \tilde{D}$. The maximum in \eqref{de} is achieved when $i=1$, and \eqref{de} can be rewritten as
$$\kappa_1(\theta_1-D)\phi'+\frac{1}{2}\sigma^2D\phi''-\lambda\phi+D=0.$$
Define $\psi(D):=\phi(D)-\frac{D}{\lambda+\kappa_1}-\frac{\theta_1\kappa_1}{\lambda(\lambda+\kappa_1)}.$ Then $\psi(D)$ satisfies
\begin{equation}\label{eq_psi}
\kappa_1(\theta_1-D)\psi'+\frac{1}{2}\sigma^2D\psi''-\lambda\psi=0.
\end{equation}
Let $x = \frac{2\kappa_1D}{\sigma^2}$ and let $\psi(D) = \xi(x)$. Then, \eqref{eq_psi} can be rewritten as follows:
\begin{equation}\label{eq_xi}
x\xi''+\left(b_1-x\right)\xi' - a_1 \xi=0,
\end{equation}
where $a_1 = \frac{\lambda}{\kappa_1}$ and $b_1 =\frac{2\kappa_1\theta_1}{\sigma^2}$. 
\eqref{eq_xi} is a differential equation called Kummer's equation and it is known to have an explicit solution formula 
$$\xi(x) = C_1M(a_1,b_1,x) + C_2U(a_1,b_1,x)$$
(e.g. \citet{olver2010nist}).
$M(a_1,b_1,x)$ and $U(a_1,b_1,x)$ are confluent hypergeometric function of the first and second type, respectively. These functions are defined by infinite series, and the definition can be found in Chapter 13 of \citet{olver2010nist}. Using this notation, the solution of \eqref{de} can be written as 
\begin{equation}
\begin{aligned}
\phi(D)&=\xi\left(\frac{2\kappa_1D}{\sigma^2}\right)+\frac{D}{\lambda+\kappa_1}+\frac{\theta_1\kappa_1}{\lambda(\lambda+\kappa_1)} \\
&= C_1M\left(a_1,b_1,\frac{2\kappa_1D}{\sigma^2}\right) + C_2U\left(a_1,b_1,\frac{2\kappa_1D}{\sigma^2}\right)+\frac{D}{\lambda+\kappa_1}+\frac{\theta_1\kappa_1}{\lambda(\lambda+\kappa_1)}
\end{aligned}
\end{equation}
for $D \le \tilde{D}$. Similarly, the solution when $D > \tilde{D}$ is
$$\phi(D)=C_3M\left(a_2,b_2,\frac{2\kappa_2D}{\sigma^2}\right) + C_4U\left(a_2,b_2,\frac{2\kappa_2D}{\sigma^2}\right)+\frac{D}{\lambda+\kappa_2}+\frac{\theta_2\kappa_2}{\lambda(\lambda+\kappa_2)},$$
where $a_2 = \frac{\lambda}{\kappa_2}$ and $b_2 =\frac{2\kappa_2\theta_2}{\sigma^2}$. 

Now, we find appropriate constants, $C_1, C_2, C_3, C_4$ that constrains $\phi(D)$ to have a finite value at 0 and makes it linear near infinity. 
First, observe that $M(a,b,x) = 1+O(x)$ and $U(a,b,x) = \frac{\Gamma(1-b)}{\Gamma(a+1-b)}x^{1-b}+O(x^{2-b})$ as $x \to 0$ (see Chapter 13.2 of \citet{olver2010nist}). Because we have assumed that $b_1 = \frac{2\kappa_1\theta_1}{\sigma^2} \ge 1$ in Section 2, $U\left(a_1,b_1, \frac{2\kappa_1D}{\sigma^2}\right)$ diverges and $C_2$ must be 0.

Next, note that $M(a,b,x) = O(\frac{e^x x^{a-b}}{\Gamma(a)})$ and $U(a,b,x) = O(x^{-a})$ as $x \to \infty$ (again, see Chapter 13.2 of \citet{olver2010nist}). Regardless of $a_2$ and $b_2$, $M\left(a_2,b_2, \frac{2\kappa_2D}{\sigma^2}\right)$ diverges at an exponential rate; therefore, $C_3$ must be 0. Because $U\left(a_2,b_2, \frac{2\kappa_2D}{\sigma^2}\right)$ converges to 0, we can see that $C_4U\left(a_2,b_2, \frac{2\kappa_2D}{\sigma^2}\right)$ converges to 0 for any $C_4$, and $\phi(D)$ is linear at infinity.
Hence, $\phi(D)$ can be simplified as follows, using the notation $m(D)=M(a_1,b_1,x_1)$ and $u(D)=U(a_2,b_2,x_2)$.
\begin{equation}
\begin{aligned}
\phi(D) = \begin{cases}
\phi_1(D) := \frac{D}{\lambda+\kappa_1} + \frac{\theta_1 \kappa_1}{\lambda(\lambda+\kappa_1)} + C_1 m(D) &\text{if } D<\tilde{D}\\
\phi_2(D) := \frac{D}{\lambda+\kappa_2} + \frac{\theta_2 \kappa_2}{\lambda(\lambda+\kappa_2)} + C_4 u(D) &\text{if } D>\tilde{D}
\end{cases}
\end{aligned}
\end{equation}

Finally, we uniquely determine the values $C_1$ and $C_4$ that make $\phi$ a $C^2$ function. We need $\phi_1(\tilde{D}) = \phi_2(\tilde{D})$, $\phi_1'(\tilde{D}) = \phi_2'(\tilde{D})$, and $\phi_1''(\tilde{D}) = \phi_2''(\tilde{D})$. We focus on the first two equations. Using the formula 
\begin{equation}\label{diff.m}
\frac{\partial M(a,b,x)}{\partial x} = \frac{a}{b}M(a+1,b+1,x)
\end{equation}
and
\begin{equation}\label{diff.u}
\frac{\partial U(a,b,x)}{\partial x} = -aU(a+1,b+1,x)
\end{equation}
from \citet{olver2010nist}, we can uniquely determine $C_1$ and $C_4$. The linear independence of the two equations follows from a simple observation of the coefficients' sign. We omit the details, and the solutions, $C_1, C_4$, are defined as $E$ and $F$ in Appendix. There, we show that $E$ and $F$ are both positive. The relationship, $\phi_1''(\tilde{D}) = \phi_2''(\tilde{D})$, follows from our assumption that $\phi_1(\tilde{D}) = \phi_2(\tilde{D})$ and $\phi_1'(\tilde{D}) = \phi_2'(\tilde{D})$ by taking limits $D \to \tilde{D}+$ and $D \to \tilde{D}-$ in \eqref{de}. Therefore, $E, F$ are the unique constants that make $\phi$ a $C^2$ function, and our proof is complete.
\end{proof}

One may question whether $\Phi(D)$ is increasing. This property must hold to be analogous to our intuition. The following lemma confirms this; it additionally shows that $\Phi'(D)$ is bounded above.

\begin{lemma}
$\Phi(D)$ is convex and $\frac{1}{\lambda+\kappa_1} < \Phi'(D) < \frac{1}{\lambda+\kappa_2}$.
\end{lemma}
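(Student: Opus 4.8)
The plan is to combine the explicit piecewise formula for $\Phi$ from Theorem 5.1 with the standard monotonicity and convexity properties of the confluent hypergeometric functions. I would first record three facts about the building blocks: (a) $E>0$ and $F>0$, established in the Appendix and quoted in Theorem 5.1; (b) $m(D)=M(a_1,b_1,x_1)$ is positive, strictly increasing and strictly convex in $D$ on $(0,\infty)$; and (c) $u(D)=U(a_2,b_2,x_2)$ is positive, strictly decreasing and strictly convex in $D$ on $(0,\infty)$. Fact (b) is immediate from the series $M(a,b,x)=\sum_{n\ge0}\frac{(a)_n}{(b)_n\,n!}x^n$, whose coefficients are all positive because $a_1=\lambda/\kappa_1>0$ and $b_1=2\kappa_1\theta_1/\sigma^2>0$, together with $x_1=2\kappa_1 D/\sigma^2$ being affine and increasing in $D$; alternatively, differentiate via $\partial_x M(a,b,x)=\frac{a}{b}M(a+1,b+1,x)$ and $\partial_x^2 M(a,b,x)=\frac{a(a+1)}{b(b+1)}M(a+2,b+2,x)$, both positive. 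Fact (c) follows from the integral representation $U(a,b,x)=\frac{1}{\Gamma(a)}\int_0^\infty e^{-xt}t^{a-1}(1+t)^{b-a-1}\,dt$, valid since $a_2=\lambda/\kappa_2>0$, which gives $U>0$, $\partial_x U<0$ and $\partial_x^2 U>0$ by differentiating under the integral (equivalently, iterate $\partial_x U(a,b,x)=-aU(a+1,b+1,x)$ and use positivity of $U$).

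Granting these, the one-sided estimates are read off directly from the formula for $\Phi$. For $D<\tilde{D}$ we have $\Phi'(D)=\frac{1}{\lambda+\kappa_1}+E\,m'(D)$ and $\Phi''(D)=E\,m''(D)$, hence $\Phi''>0$ and $\Phi'(D)>\frac{1}{\lambda+\kappa_1}$ there; for $D>\tilde{D}$ we have $\Phi'(D)=\frac{1}{\lambda+\kappa_2}+F\,u'(D)$ and $\Phi''(D)=F\,u''(D)$, hence $\Phi''>0$ and $\Phi'(D)<\frac{1}{\lambda+\kappa_2}$ there. Since $\Phi\in C^2$ by Theorem 5.1, $\Phi''$ is continuous, so $\Phi''\ge 0$ on all of $(0,\infty)$; this yields convexity of $\Phi$ and monotonicity (nondecreasing) of $\Phi'$.

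It then remains to upgrade the two strict bounds from the respective half-lines to all of $(0,\infty)$, using continuity of $\Phi'$ at $\tilde{D}$ and the monotonicity just obtained. Evaluating the two expressions for $\Phi'$ at $\tilde{D}$ through one-sided limits gives $\Phi'(\tilde{D})=\frac{1}{\lambda+\kappa_1}+E\,m'(\tilde{D})>\frac{1}{\lambda+\kappa_1}$ and $\Phi'(\tilde{D})=\frac{1}{\lambda+\kappa_2}+F\,u'(\tilde{D})<\frac{1}{\lambda+\kappa_2}$. For $D\le\tilde{D}$ the lower bound is the estimate above, while the upper bound follows from $\Phi'(D)\le\Phi'(\tilde{D})<\frac{1}{\lambda+\kappa_2}$; for $D\ge\tilde{D}$ the upper bound is the estimate above, while the lower bound follows from $\Phi'(D)\ge\Phi'(\tilde{D})>\frac{1}{\lambda+\kappa_1}$. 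This proves $\frac{1}{\lambda+\kappa_1}<\Phi'(D)<\frac{1}{\lambda+\kappa_2}$ for every $D>0$ and completes the argument.

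I expect the only genuinely delicate point to be citing the monotonicity/convexity of $M$ and $U$ correctly: one must verify that the sign conditions $a_i,b_i>0$ (and the Feller condition $b_i\ge 1$, already imposed in Section 2) are in force so that the power-series and integral-representation arguments apply verbatim. Everything else is bookkeeping with the explicit formula and the $C^2$ matching at $\tilde{D}$.
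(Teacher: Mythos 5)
Your proposal is correct and follows essentially the same route as the paper: establish convexity of $m$ and $u$ from the derivative formulas \eqref{diff.m} and \eqref{diff.u}, read the one-sided bounds off the piecewise expression for $\Phi'$, and use monotonicity of $\Phi'$ to globalize. The only (cosmetic) difference is that you glue the two bounds together via the common value $\Phi'(\tilde{D})$, whereas the paper uses the limits of $\Phi'$ as $D\to 0$ and $D\to\infty$; both work, and your version has the minor advantage of not needing the asymptotics of $M$ and $U$ at the endpoints.
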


\begin{proof}
Recall that we defined $\Phi(D)$ as 
\begin{equation}
\begin{aligned}
\Phi(D) = \begin{cases}
\frac{D}{\lambda+\kappa_1} + \frac{\theta_1 \kappa_1}{\lambda(\lambda+\kappa_1)} + Em(D) & \text{if } D<\tilde{D}\\
\frac{D}{\lambda+\kappa_2} + \frac{\theta_2 \kappa_2}{\lambda(\lambda+\kappa_2)} + Fu(D) & \text{if } D>\tilde{D}\
\end{cases}.
\end{aligned}
\end{equation}
We can easily verify that $m(D)$ and $u(D)$ are convex using the differential formulas \eqref{diff.m} and \eqref{diff.u}. Therefore, $\Phi(D)$ is also convex, and $\Phi'(D)$ is increasing in $D$. Note that
\begin{equation}
\begin{aligned}
\Phi'(D) = \begin{cases}
\frac{1}{\lambda+\kappa_1} + \frac{2\kappa_1}{\sigma^2} \frac{a_1}{b_1} M\left(a_1+1,b_1+1,\frac{2\kappa_1 D}{\sigma^2} \right) E & \text{if } D<\tilde{D}\\
\frac{1}{\lambda+\kappa_2} - \frac{2\kappa_2}{\sigma^2} a_2 U\left(a_2+1,b_2+1,\frac{2\kappa_2 D}{\sigma^2} \right) F & \text{if } D>\tilde{D}
\end{cases}
\end{aligned}
\end{equation}
and 
$$\lim_{D \to 0} M\left(a_1+1,b_1+1,\frac{2\kappa_1 D}{\sigma^2} \right) = 1, \lim_{D \to \infty} U\left(a_2+1,b_2+1,\frac{2\kappa_2 D}{\sigma^2} \right) = 0$$
(here, we used the asymptotic properties discussed in the proof of the Theorem 5.1).
Hence, 
$$\lim_{D \to 0} \Phi'(D) = \frac{1}{\lambda+\kappa_1} + \frac{2\kappa_1}{\sigma^2} \frac{a_1}{b_1}, \lim_{D \to \infty} \Phi'(D) = \frac{1}{\lambda+\kappa_2}$$ and $\frac{1}{\lambda+\kappa_1} < \Phi'(D) < \frac{1}{\lambda+\kappa_2}$ follows from the convexity of $\Phi$.
\end{proof}

Finally, in Corollary 5.3, we use the previous theorems and lemmas to prove that $\Phi(D)$ is indeed the minimal equilibrium price. This simple corollary identifies the ``minimal equilibrium price" as the classical solution of \eqref{de'}. 

\begin{cor}
$\Phi(D)$, as defined in Theorem 5.1, is the minimal equilibrium price. In other words, $\Phi(D) = P_*(D)$.
\end{cor}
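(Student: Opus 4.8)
The plan is to establish the two inequalities $P_*(D)\le\Phi(D)$ and $P_*(D)\ge\Phi(D)$ separately; the first is immediate and the second carries the content. For the first, Theorem 5.1 guarantees that $\Phi$ is positive and of class $C^2$, and Lemma 5.2 gives $\tfrac{1}{\lambda+\kappa_1}<\Phi'<\tfrac{1}{\lambda+\kappa_2}$, so $\Phi$ is a positive $C^2$ solution of \eqref{de} with bounded derivative; since \eqref{de} is the specialization $\sigma_1=\sigma_2$ of \eqref{de'}, Theorem 4.1 shows $\Phi$ is an equilibrium price, and because $P_*$ is by construction the smallest equilibrium price, $P_*\le\Phi$.

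For the reverse inequality I would use the iteration $P_*=\lim_k P_k$ of \eqref{p* def}. Fix $D>0$ and introduce, for $h>0$, the value $V_k^h(D)$ of the following admissible strategy in the game defining $P_k$: on the grid $0,h,2h,\dots,kh$, at each grid time $jh$ the asset is handed to the group $i$ with the larger current drift $\kappa_i(\theta_i-D_{jh})$, it is held by that group (so the dividend evolves under that group's belief) on $[jh,(j+1)h)$, and at time $kh$ it is liquidated at the intrinsic value $I$. Since in each step of \eqref{p* def} one is free to choose a particular group and the deterministic stopping time $\tau=h$, and since the resale value there is $P_{k-1}\ge V_{k-1}^h$, a one-line induction on $k$ (base case $V_0^h=I=P_0$) yields $V_k^h(D)\le P_k(D)$ for every $h>0$, hence $V_k^h(D)\le P_*(D)$.

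It remains to show that $V_k^{h_k}(D)\to\Phi(D)$ for some sequence with $h_k\to0$ and $kh_k\to\infty$. As $h\to0$ the law of the piecewise-frozen-belief dividend process converges to that of the diffusion $dD_t=\max\{\kappa_1(\theta_1-D_t),\kappa_2(\theta_2-D_t)\}\,dt+\sigma\sqrt{D_t}\,dB_t$, whose drift is globally Lipschitz; and then $kh\to\infty$ forces $V_k^h(D)\to\mathbb{E}\bigl[\int_0^\infty e^{-\lambda t}D_t\,dt\mid D_0=D\bigr]$ for this diffusion (the liquidation term $e^{-\lambda kh}\mathbb{E}[I(D_{kh})]$ vanishes because $I$ is affine and the process is mean reverting). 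Finally, since $\Phi$ solves \eqref{de} and $\Phi'>0$ by Lemma 5.2, the nonlinear term of \eqref{de} coincides with $\max\{\kappa_1(\theta_1-D)\Phi',\kappa_2(\theta_2-D)\Phi'\}$; applying the Ito formula to $e^{-\lambda t}\Phi(D_t)$ along this diffusion and taking expectations (the stochastic integral has zero mean because $\Phi'$ is bounded, and $e^{-\lambda t}\mathbb{E}[\Phi(D_t)]\to0$ by the linear growth of $\Phi$ together with mean reversion) identifies that resolvent with $\Phi(D)$. Therefore $P_*(D)=\lim_k P_k(D)\ge\lim_k V_k^{h_k}(D)=\Phi(D)$, and with the first inequality $P_*=\Phi$. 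An alternative line, closer to the corresponding argument in \citet{chen2011asset}, first uses the Ito formula together with the first hitting time of $\tilde D$ to reduce the reverse inequality to the single equality $P_*(\tilde D)=\Phi(\tilde D)$, but establishing that point still requires the same belief-switching construction.

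The main obstacle is precisely the convergence in the previous paragraph: showing rigorously that the finite-transaction ``best response'' values increase to $\Phi$. This rests on (i) the convergence of the frozen-coefficient scheme to the kinked-drift diffusion together with the uniform integrability needed to pass the limits inside the expectations, for which one needs the non-degeneracy and non-explosion of the square-root dynamics ensured by the standing assumption $2\kappa_i\theta_i\ge\sigma^2$ and the mean reversion; and (ii) the Feynman--Kac identification of $\Phi$ with the diffusion's $\lambda$-resolvent, which uses $\Phi'>0$ (to turn the maximum of the two operators in \eqref{de} into the maximum of the two drifts) and the linear growth bound of Lemma 5.2. The remaining steps are routine consequences of Theorems 4.1 and 5.1 and Lemma 5.2.
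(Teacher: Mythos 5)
Your first direction ($P_*\le\Phi$ via Theorem 4.1, Lemma 5.2 and minimality) is exactly the paper's argument. For the hard direction $P_*\ge\Phi$ you take a genuinely different route: the paper never returns to the iteration \eqref{p* def}; instead it considers $\inf_{D\ge 0}\{P_*(D)-\Phi(D)\}$, disposes of the case where the infimum is approached as $D\to\infty$ using $0\le \Phi-P_*\le \Phi-I=Fu\to 0$, and at a finite minimizer $D_1$ uses $\Phi$ as a test function in the viscosity supersolution property of $P_*$ from Theorem 4.3, comparing $F(D_1,P_*(D_1),\Phi'(D_1),\Phi''(D_1))\ge 0$ with the classical equation $F(D_1,\Phi(D_1),\Phi'(D_1),\Phi''(D_1))=0$ to read off $P_*(D_1)\ge\Phi(D_1)$. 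That is a three-line comparison argument, and it is the reason Theorem 4.3 is in the paper at all; your proposal does not use Theorem 4.3 anywhere. Your alternative --- a greedy belief-switching strategy on a mesh-$h$ grid, dominated by $P_k$, whose value should converge to the $\lambda$-resolvent of the kinked-drift diffusion $dD_t=\max_i\kappa_i(\theta_i-D_t)\,dt+\sigma\sqrt{D_t}\,dB_t$, identified with $\Phi$ by It\^o and $\Phi'>0$ --- is conceptually sound and closer in spirit to the probabilistic construction in \citet{chen2011asset}; the induction $V_k^h\le P_k$ and the Feynman--Kac identification of $\Phi$ are both fine.

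The genuine gap is the step you yourself flag: the convergence $V_k^{h_k}(D)\to\Phi(D)$. This is not a routine remark --- it requires (i) quantitative convergence of the piecewise-frozen-drift square-root scheme to the kinked-drift diffusion (manageable, since the drift discrepancy on a step is $(\kappa_1-\kappa_2)|\tilde D-D|$-type and a Gronwall argument applies, but it must be written), (ii) uniform-in-$h$ moment bounds to pass the limit inside $\mathbb{E}\bigl[\int_0^{kh}e^{-\lambda t}D_t\,dt\bigr]$ over an infinite horizon, and (iii) a diagonal choice of $h_k$ reconciling $h\to 0$ with $kh\to\infty$. As written, the entire technical content of the inequality $P_*\ge\Phi$ is deferred to this unproved limit, whereas the paper's minimum-principle argument replaces all of it with one application of the already-established Theorem 4.3. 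If you want to keep your route, the convergence argument must be supplied; otherwise you should invoke the viscosity supersolution property, which is available precisely for this purpose.
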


\begin{proof}
According to Theorem 5.1 and Lemma 5.2, $\Phi$ is a positive $C^2$ solution of \eqref{de} (and also \eqref{de'}) that has bounded derivatives. Hence, we can apply Theorem 4.1 and $\Phi$ is an equilibrium price. By the minimality of $P_*(D)$, we know that $\Phi(D) \ge P_*(D)$. We prove that $\Phi(D) \le P_*(D)$ by considering the minimization problem 
\begin{equation}\label{inf}
\inf_{D\ge0} \{ P_*(D)-\Phi(D) \}.
\end{equation}
This approach was originally mentioned in \citet{chen2013erratum}.

First, we consider the case when the infimum is achieved as $D \to \infty$. Note that $0 \le \Phi(D)-P_*(D) \le \Phi(D)-I(D)=FU(D)$ for a sufficiently large $D$ and $U(D) \to 0$ as $D \to \infty$. Hence, $\Phi(D)-P_*(D) \to 0$ as $D \to \infty$ and the infimum of \eqref{inf} is 0.

Next, we consider the case when the infimum of \eqref{inf} is achieved at a finite value, $D_1$. We show that $P_*(D_1) \ge \Phi(D_1)$, which completes the proof. For simplicity, define $F(D,d,p,\gamma)$ as
\begin{equation}
\begin{aligned}
F(D,d,p,\gamma) := -\max \{\kappa_1(\theta_1-D)p+\frac{1}{2}\sigma^2D \gamma, \kappa_2(\theta_2-D)p+\frac{1}{2}\sigma^2D \gamma \}+\lambda d - D.
\end{aligned}
\end{equation}
The definition of $D_1$ makes it a global minimum of $P_*-\Phi$. Because $P_*$ is a viscosity supersolution of \eqref{visc}, $F(D_1, P_*(D_1), \Phi'(D_1), \Phi''(D_1))\ge 0$. In other words,
\begin{equation}\label{eq11}
\begin{aligned}
-\max_{i=1,2}\{\kappa_1(\theta_1-D_1)\Phi'+ \frac{1}{2}\sigma^2D_1 \Phi'', \kappa_2(\theta_2-D_1)\Phi'+ \frac{1}{2}\sigma^2D_1 \Phi''\} +\lambda P_*(D_1) - D_1 \ge 0.
\end{aligned}
\end{equation}
However, because $\Phi$ is a classical solution of \eqref{de'}, 
\begin{equation}\label{eq22}
\begin{aligned}
-\max_{i=1,2}\{\kappa_1(\theta_1-D_1)\Phi'+ \frac{1}{2}\sigma^2D_1 \Phi'', \kappa_2(\theta_2-D_1)\Phi'+ \frac{1}{2}\sigma^2D_1 \Phi'' \} +\lambda \Phi(D_1) - D_1 = 0.
\end{aligned}
\end{equation}
By comparing \eqref{eq11} and \eqref{eq22}, it is clear that $P_*(D_1) \ge \Phi(D_1)$ and $\inf_{D\ge0} \{ P_*(D)-\Phi(D) \} \ge 0$. \\
Therefore, $P_*(D) \ge \Phi(D)$ and the proof is complete.
\end{proof}

We end this subsection by calculating the bubble size, $B(D) = P_*(D)-I(D)$, explicitly. Using the formula of $\Phi(D)$ and $I(D)$, $B(D)$ can be written as 
\begin{equation}\label{bubble1}
\begin{aligned}
B(D) = \begin{cases}
Em(D) & \text{if } D \le \bar{D}\\
-\frac{(\kappa_1-\kappa_2)(D-\bar{D})}{(\lambda+\kappa_1)(\lambda+\kappa_2)} + Em(D) & \text{if } \bar{D} <D \le \tilde{D} \\
Fu(D) & \text{if } D>\tilde{D}\\
\end{cases}.
\end{aligned}
\end{equation}
Note that $\bar{D}$ may be negative and for this, there are only two cases for $B(D)$. $\tilde{D}$ is always positive because we have assumed that $\kappa_1 \theta_1 > \kappa_2 \theta_2$.

One may be curious about the exact value of $B(D)$, as the values of the confluent hypergeometric functions are not apparent. We provide a visualization of the bubble size for some sample parameters in Section 5.3. We can see that the bubble is most apparent when $D$ is near zero and quickly disappears when $D > \tilde{D}$.

\subsection{Case 2: \texorpdfstring{$\theta_1 >  \theta_2$}{Case 2: theta 1 > theta 2}}

Similar to the previous section, we attempt to calculate the bubble size explicitly. It would be ideal if the function $\Phi$ defined in Theorem 5.1 was still the minimal equilibrium price. However, there is a small issue in Theorem 5.1 in that, we cannot guarantee the positiveness of the constant $E$. $E$ and $F$ must be positive for $\Phi$ to be an equilibrium price, because $\Phi(D)$ must be at least $I(D)$. Although the positiveness of $F$ trivially follows from the assumption $\theta_1 > \theta_2$, $E$ is not necessarily positive (see the Appendix). In fact, for extreme cases when $\theta_1 \gg \theta_2$ and $b_2 \gg x_2$, $E$ becomes negative. In this section, we first calculate the bubble size under the assumption that $E \ge 0$. Next, we provide a lower bound for the bubble size for the general case.

First, we assume that $E \ge 0$. Random simulations on the parameter values indicate that this assumption is valid in most cases. Under the assumption that $E \ge 0$, Theorem 5.1 and Lemma 5.2 hold. Hence, $\Phi$ is the minimal equilibrium price by Corollary 5.3. In this case, $\bar{D} > \tilde{D}$, and the formula of $B(D)$ is different from the one in Section 5.1. More precisely, the bubble size changes to
\begin{equation}\label{bubble2}
\begin{aligned}
B(D) = \begin{cases}
Em(D) & \text{if } D \le \tilde{D}\\
\frac{(\kappa_1-\kappa_2)(D-\bar{D})}{(\lambda+\kappa_1)(\lambda+\kappa_2)} + Fu(D) & \text{if } \tilde{D} <D \le \bar{D} \\
Fu(D) & \text{if } D>\bar{D}\\
\end{cases}.
\end{aligned}
\end{equation}

Next, we consider the general case in which $E$ may be negative. In this case, $\Phi(D) < I(D)$ for $D \le \tilde{D}$ and $\Phi(D)$ is not an equilibrium price. However, $\Phi(D)$ is still a $C^2$ solution of \eqref{visc} with linear growth at infinity and $P_*$ is a viscosity supersolution of \eqref{visc}. Therefore, we can use the proof of Corollary 5.3 to show that $P_*(D) \ge \Phi(D)$. Hence,
$$B(D) = P_*(D)-I(D) \ge \Phi(D)-I(D)$$
for $D > \tilde{D}$. We can also calculate the lower bound when $D \le \tilde{D}$ by finding another positive increasing $C^2$ solution of \eqref{de}. However, the choice of this solution depends heavily on the initial parameters, and we omit the details. In the following remark, we analyze the behavior of the bubble, using the formula of $B(D)$ in \eqref{bubble1} and \eqref{bubble2}.

\begin{remark}
Our bubble formula satisfies the following facts.
\begin{enumerate}[(a)]
	\item The bubble disappears as $\theta_1 - \theta_2 \to 0$ and $\kappa_1 - \kappa_2 \to 0$, i.e. as the heterogeneous parameters converge to common values.
	\item The bubble increases as $\kappa_1 - \kappa_2$ increases,  i.e. as the difference between the heterogeneous beliefs increases.
	\item A bubble does not exist when $\kappa_1=\kappa_2$ and $\theta_1\neq \theta_2,$ i.e. heterogeneous beliefs may not cause a bubble. 
	\item A bubble exists when $\kappa_1>\kappa_2$ and $\theta_1 = \theta_2,$ i.e. different beliefs on the mean-reversion rate with common beliefs on the mean-reversion level cause a bubble.
	\item The bubble becomes larger as $\sigma$ increases, i.e. as the volatility increases.
\end{enumerate}  
\end{remark}

These results rely on the assumption that $E \ge 0$ and can be verified by the formula \eqref{bubble1} and \eqref{bubble2}. Recall that $B(D)$ is expressed as $Em(D)$ or $Fu(D)$, conditional to the initial dividend rate, $D$. As $E$ and $F$ are linear with respect to $\theta_1 - \theta_2$ and $\kappa_1 - \kappa_2$, $B(D)$ is also linear in these terms (see the formula of $E$ and $F$ in \eqref{E} and \eqref{F}). Results (a) and (b) directly follows from this observation. 
Results (c) and (d) comes from Theorem \ref{thm:main}.
Result (d) also can be checked by plugging $\theta_1 = \theta_2$ in \eqref{E} and \eqref{F} to see that $E, F > 0$. Result (e) follows from the fact that $E$ and $F$ are increasing functions of $\sigma$.

Recall from Section 2 that, at time $t$, the asset would be possessed by the investor that evaluates \eqref{ivalue} highly. In fact, we can identify the owner of the asset at each $t$ through simple calculations. The short answer is that group $1$ holds the asset if $D_t < \tilde{D}$, and group $2$ possesses the asset if $D_t > \tilde{D}$.\footnote{This result is consistent with the optimal strategy from Theorem 2.3 of \cite{muhle2018risk}.} We note that this transaction threshold value, $\tilde{D}$, is also determined by the drift term in \eqref{D_t}. Indeed, a simple comparison on the drift term shows that
$$\kappa_1 (\theta_1 - D) > \kappa_2 (\theta_2 - D)$$
if and only if $D < \tilde{D}$. The proof of the claim directly follows from this comparison and the fact that $P_*$ satisfies \eqref{de'} and we omit the details.

Based on this observation, we can explain behavior (e). As $\sigma$ increases, the process $D_t$ crosses $\tilde{D}$ more often. Hence, investors have more opportunities for trading, and this leads to a larger bubble. This explanation is consistent with the following statement from \citet{scheinkman2003overconfidence}, even though our approaches differ: ``bubbles are accompanied by large trading volume and high price volatility".

\begin{remark}
Our result about the ownership of the asset is consistent with the optimal strategy from Theorem 2.3 of \cite{muhle2018risk}. Under a market clearing assumption, they claim that the asset is entirely possessed by the investor who achieves the maximum in the supremum term of a differential equation. In the case of our problem setting, this differential equation becomes \eqref{de} and the optimal strategy coincides. However, it is not clear whether the market clearing price is equivalent to our equilibrium definition in the general setting. Our price scheme only considers the first possible transaction whereas there may be uncountable continuous time trades according to the market clearing price. Also, typically there are infinitely many equilibrium prices (this is why we focused on the minimal price), but the market clearing price in \cite{muhle2018risk} tends to be unique.
\end{remark}

\subsection{Visualization and interpretation of the bubble}
Now, we visualize the aforementioned bubble formula. We compare the bubbles created from three different parameter values ($\theta_1 > \theta_2$, $\theta_1 < \theta_2$, and $\theta_1 = \theta_2$). It is apparent that the bubble size is different for each case.

First, we present an example of the intrinsic value and the minimal equilibrium price when $\theta_1 < \theta_2$ and $\kappa_1\theta_1 \ge \kappa_2\theta_2$ in Figure 1. Here, we set the parameter values as 
\begin{equation}\label{initial2}
\begin{aligned}
\kappa_1 =0.2, \kappa_2 = 0.1, \theta_1 = 0.015, \theta_2 = 0.02, \lambda = 0.02, \sigma = 0.02.
\end{aligned}
\end{equation}
These initial values were chosen to be similar to those of \citet{chen2011asset} and the mean level beliefs were set to realistic values of $0.015$ and $0.02$. The discount rate $\lambda$ can also be interpreted as the interest rate and was set to $0.02$. For these initial values, $\bar{D} = -0.04<0$, and the intrinsic value is a linear function. It should be noted that the bubble is the largest at $D=0$ and rapidly disappears as $D$ increases. In particular, the bubble is nearly indistinguishable after $D>\tilde{D}=0.01$.
\begin{figure}[ht]
    \centering
    \includegraphics[width=0.5\linewidth]{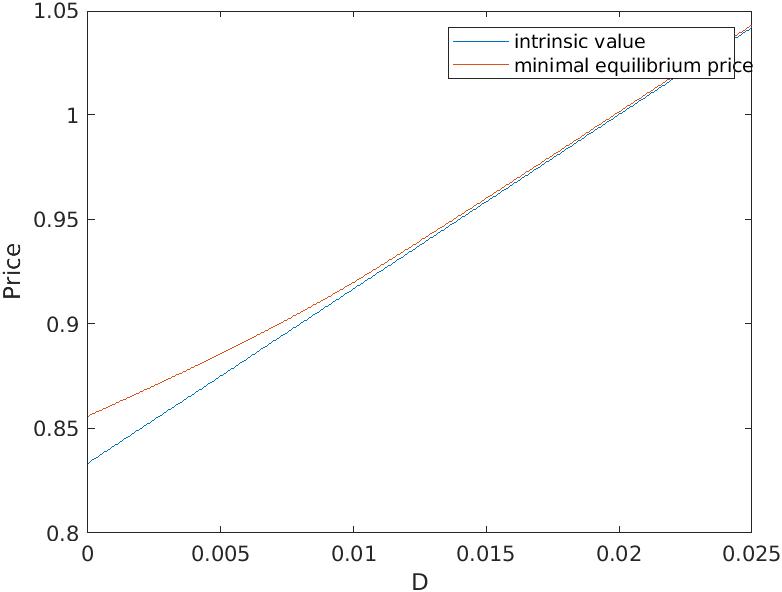}
    \caption{Minimal equilibrium price and the intrinsic value as a function of $D$ for the initial values in \eqref{initial2}}
\end{figure}

We also see how the relative size of the bubble change with respect to the initial dividend rate. We define $R(D):= P_*(D)/I(D)-1$ as the relative bubble size. In Figure 2, we see that $R(D)$ is 2.71$\%$ when $D=0$ and linearly decreases to 0.35$\%$ as $D$ increases to $\tilde{D}=0.01$. After $\tilde{D}=0.01$, $R(D)$ converges to 0. We see that although the bubble mathematically exists, its size is very small and is most apparent at low initial dividend rates.

\begin{figure}[ht]
    \centering
    \includegraphics[width=0.5\linewidth]{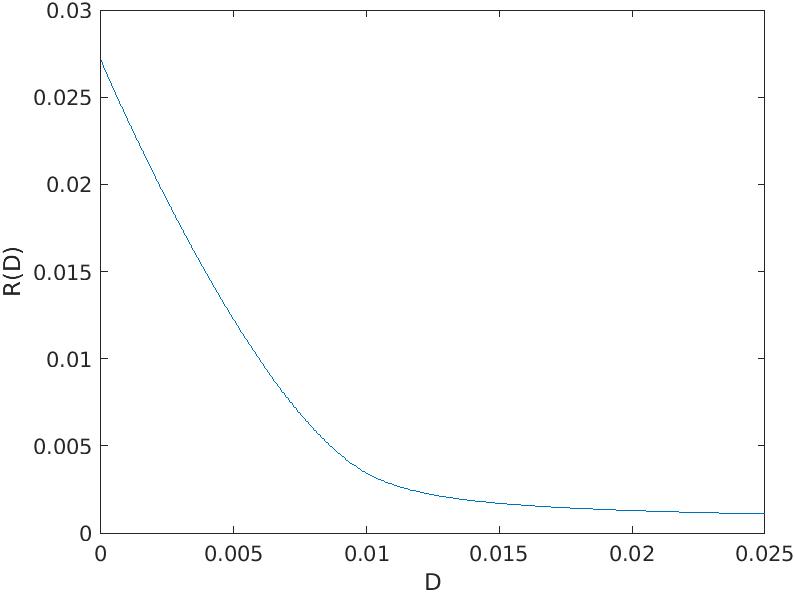}
    \caption{Relative size of the bubble for the parameters in \eqref{initial2}}
\end{figure}

Next, we consider the case in which $\theta_1 > \theta_2$. Here, we set the parameter values as 
\begin{equation}\label{initial1}
\begin{aligned}
\kappa_1 =0.2, \kappa_2 = 0.1, \theta_1 = 0.04, \theta_2 = 0.02, \lambda = 0.02, \sigma = 0.02.
\end{aligned}
\end{equation}
For these values $E = 1.15 \times 10^{-4}>0$ and $P_* = \Phi$ due to the reasoning in Section 5.2. We can see in Figure 3 that market transactions smooth the equilibrium price, in contrast to the non-differentiable intrinsic value. Additionally, the bubble size when $D=0$ is almost 0 and increases until $\bar{D} = 0.26$. This result sharply contrasts the results in \eqref{initial2} and we see that the sign of $\theta_1-\theta_2$ significantly contributes to the bubble size. Even though it is not apparent in Figure 3, the bubble size converges to 0 as $D$ goes to infinity. 
\begin{figure}[ht]
    \centering
    \includegraphics[width=0.5\linewidth]{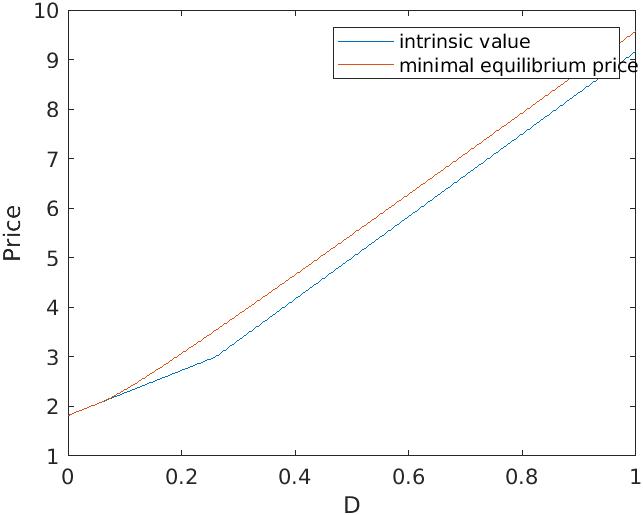}
    \caption{Minimal equilibrium price and the intrinsic value as a function of $D$ for the initial values in \eqref{initial1}}
\end{figure}

We also provide the relative bubble size $R(D)$ in Figure 4. $R(D)$ is close to 0 when $D \le \tilde{D} = 0.06$ and increases to almost 18$\%$ as $D$ increases to $\bar{D} = 0.26$. We note that the price bubble is most apparent around $\tilde{D}$ and disappears as $D \to \infty$.
\begin{figure}[ht]
    \centering
    \includegraphics[width=0.5\linewidth]{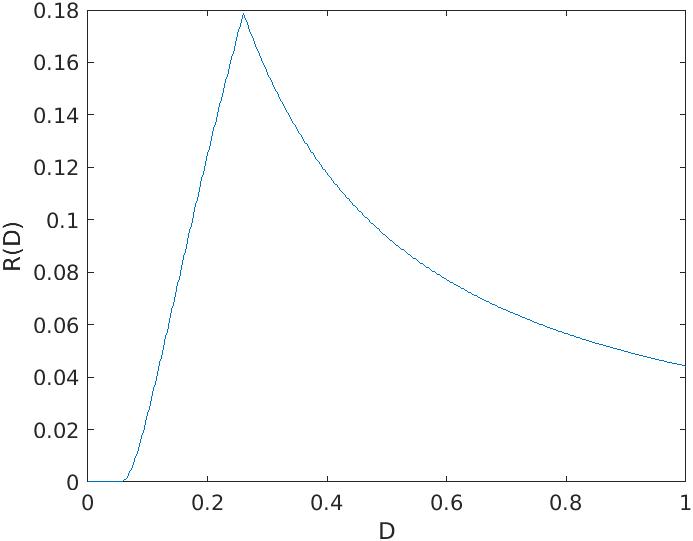}
    \caption{Relative size of the bubble for the parameters in \eqref{initial1}}
\end{figure}
The comparison of bubbles from the initial values \eqref{initial2} and \eqref{initial1} provides some interesting results. Although the minimal equilibrium price is defined by the same function $\Phi$, the bubble size is heavily dependent on the parameter values. With $\theta_1 < \theta_2$, the bubble is significant when $D < \tilde{D}$. However, when $\theta_1 > \theta_2$, the bubble size is nearly 0 when $D < \tilde{D}$, and is most apparent when $\tilde{D} < D < \bar{D}$. 

Finally, we consider the case when $\theta_1 = \theta_2$. In other words, the investors disagree only on the rate of mean-reversion. Hence, this case is basically a positive modification of the Chen--Kohn model. Figures 5 and 6 are based on the initial parameters
\begin{equation}\label{initial3}
\begin{aligned}
\kappa_1 =0.2, \kappa_2 = 0.1, \theta_1 = \theta_2 = 0.04, \lambda = 0.02, \sigma = 0.02.
\end{aligned}
\end{equation}

\begin{figure}
    \centering
    \includegraphics[width=0.5\linewidth]{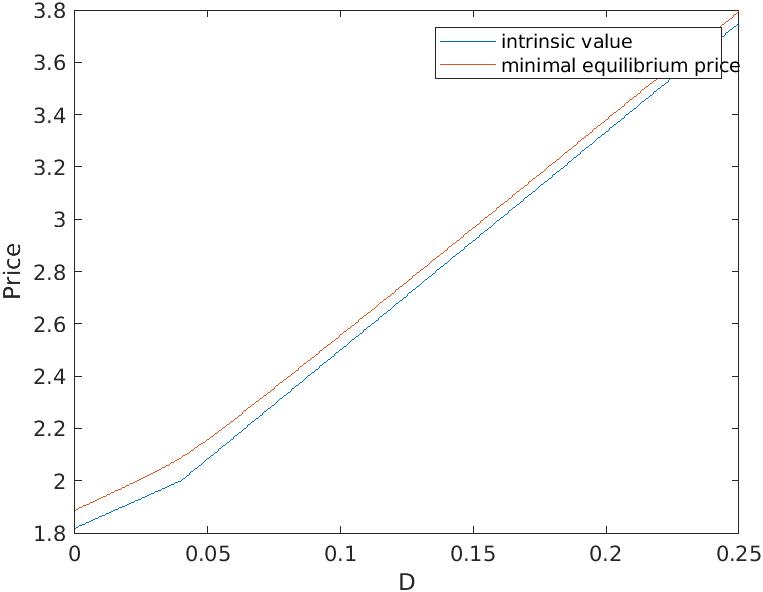}
    \caption{Minimal equilibrium price and the intrinsic value as a function of $D$ for the parameters in \eqref{initial3}}
\end{figure}

\begin{figure}
    \centering
    \includegraphics[width=0.5\linewidth]{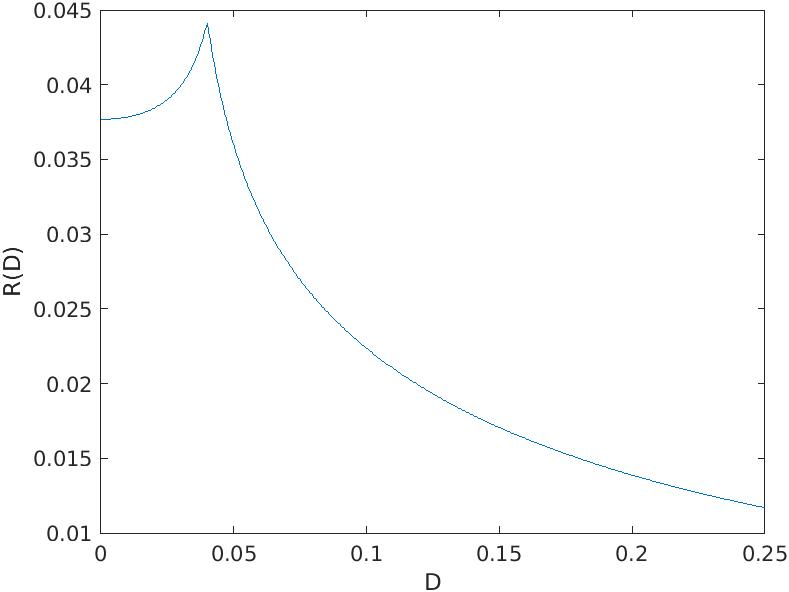}
    \caption{Relative size of the bubble for the parameters in \eqref{initial3}}
\end{figure}

\noindent Note that $\bar{D}=\tilde{D}$ when $\theta_1 = \theta_2$. Hence, the formula for the bubble size changes only at this value. Here, the bubble is apparent for all values of $D$ and can be regarded as a compromise between the case when $\theta_1 < \theta_2$ and $\theta_1 > \theta_2$. Indeed, we can see that Figure 5 is similar to Figure 1 from \citet{chen2011asset}.

\section{Conclusion}
In this study, we proved that price bubbles exist if and only if $\kappa_1 > \kappa_2$ and $\kappa_1\theta_1 > \kappa_2 \theta_2$. In other words, the bubbles do not exist if one group uniformly dominates the drift term of the dividend stream dynamics. Heterogeneous beliefs on volatility have no effect on the formation of bubbles, even though they have a certain impact on the bubble size. These results emphasize that heterogeneous beliefs do not always result in a bubble, and we need to consider the drift term of the model.

We also calculated the bubble size when $\sigma_1 = \sigma_2$. The minimal equilibrium price is identified as a classical solution of a differential equation, and the bubble size can be written in terms of special functions. This bubble displays intuitive behaviors, and we have provided some numerical examples for its visualization. Similar to \citet{chen2011asset}, an existing bubble is permanent if our investors have consistent heterogeneous beliefs.

Finally, we mention that there are many rooms left for further research. Although we used an infinite time horizon without transaction costs, the extension to the finite time horizon with transaction costs is straightforward and we anticipate similar behaviors. Also, our calculations on the bubble size and the corresponding trading strategy were done under the assumption that $\sigma_1 = \sigma_2$, and analysis for the general case is left for subsequent research.

$$ $$

\noindent\textbf{Acknowledgments.}\\ 
Hyungbin Park was supported by the National Research Foundation of Korea (NRF) grants funded by the Ministry of Science and ICT (No. 2017R1A5A1015626, No. 2018R1C1B5085491 and No. 2021R1C1C1011675) 
and the Ministry of Education   (No. 2019R1A6A1A10073437) through the Basic Science Research Program.

\appendices
\section{Appendix: explicit formula of \texorpdfstring{$E, F$}{E, F}}
The explicit value of the constants $E, F$ from Section 5 can be expressed by the following formula. We recall that $E, F$ are constants that depend on the model parameters, $\kappa_1, \kappa_2, \theta_1, \theta_2, \lambda, \text{ and } \sigma$.
\begin{equation}\label{E}
E = \frac{1}{A}\left[u_2(\kappa_1-\kappa_2) - \frac{2U_2 \kappa_1 \kappa_2 (\theta_1 - \theta_2)}{\sigma^2} \right].
\end{equation}
\begin{equation}\label{F}
F = \frac{1}{A}\left[M_1\kappa_2\frac{\theta_1-\theta_2}{\theta_1} + m_1(\kappa_1-\kappa_2)\right].    
\end{equation}
Here, $m_1:=M(a_1,b_1, x_1), u_2:=U(a_2, b_2, x_2), M_1:= M(a_1+1,b_1+1,x_1), U_2:=U(a_2+1,b_2+1,x_2)$, where $a_i, b_i$ are the constants defined in Theorem 5.1 and $x_i = \frac{2\kappa_i \tilde{D}}{\sigma^2}$. Recall that the functions $M$ and $U$ are confluent hypergeometric functions of the first and second type, respectively. $A$ is a constant defined as
$$A:= \frac{2m_1U_2}{\sigma^2}+\frac{M_1u_2}{\kappa_1\theta_1}.$$
As $m_i, u_i, M_i, U_i$ are positive, $A$ is clearly positive. From \eqref{E} and \eqref{F}, we can see that $E$ and $F$ are both positive when $\theta_1 = \theta_2$.

Now, we prove that $E$ and $F$ are positive under the assumptions of Theorem 5.1. We assume that $\kappa_1>\kappa_2$ and $\theta_1 \le \theta_2$. The formula of $E$ in \eqref{E} shows that $E>0$. Unfortunately, the proof of $F>0$ requires more complicated calculations, and we prove this using continued fractions. By the formula of $F$ in \eqref{F}, it suffices to show that
\begin{equation}
\begin{aligned}
\frac{M(a_1,b_1,x_1)}{M(a_1+1,b_1+1,x_1)} \ge \frac{\kappa_2 (\theta_2 - \theta_1)}{(\kappa_1-\kappa_2)\theta_1}.
\end{aligned}
\end{equation}

\noindent Note that the ratio of confluent hypergeometric functions of the first type can be expressed by continued fractions. We use the formula
\begin{equation}
\begin{aligned}
\frac{M(a_1,b_1,x_1)}{M(a_1+1,b_1+1,x_1)} = \frac{b_1-x_1}{b_1} + \frac{1}{b_1}{\mathop{\huge\mathrm{K}}_{m=1}^\infty}\frac{mx_1}{b_1+m-x_1} 
\end{aligned}
\end{equation}
from \citet{cuyt2008handbook}. Here, the symbol $\mathop{\mathrm{K}}$ is defined as
$${\mathop{\mathrm{K}}_{m=1}^\infty}\frac{c_m}{d_m} = \frac{c_1}{d_1 + \frac{c_2}{d_2 + \frac{c_3}{d_3+ \ddots}}}.$$
Because $b_1-x_1 = \frac{2\kappa_1(\theta_1 - \tilde{D})}{\sigma^2} \ge 0$, ${\mathop{\huge\mathrm{K}}_{m=1}^\infty}\frac{mx_1}{b_1+m-x_1}$ is positive. Meanwhile, 
\begin{equation}
\frac{b_1-x_1}{b_1} = \frac{\theta_1 - \tilde{D}}{\theta_1}
= \frac{(\kappa_1-\kappa_2)\theta_1 - (\kappa_1\theta_1-\kappa_2\theta_2)}{(\kappa_1-\kappa_2)\theta_1}
= \frac{\kappa_2 (\theta_2 - \theta_1)}{(\kappa_1-\kappa_2)\theta_1}
\end{equation}
and the proof is complete.

As stated in Section 5.2, $E$ may be negative in certain extreme cases. We now state the equivalent condition for $E \ge 0$. By the formula of $E$ in \eqref{E}, this is equivalent to
\begin{equation}\label{appendixE}
\begin{aligned}
(\kappa_1-\kappa_2)U(a_2,b_2,x_2) \ge \frac{2\kappa_1\kappa_2(\theta_1-\theta_2)}{\sigma^2} U(a_2+1, b_2+1, x_2).
\end{aligned}
\end{equation}

\noindent As the left hand side of \eqref{appendixE} is nonnegative, it is sufficient to prove \eqref{appendixE} for $\theta_1>\theta_2$.
By rewriting \eqref{appendixE}, this is equivalent to
\begin{equation}\label{Uratio}
\begin{aligned}
\frac{U(a_2+1, b_2+1, x_2)}{U(a_2,b_2,x_2)} \le \frac{(\kappa_1-\kappa_2)\sigma^2}{2\kappa_1\kappa_2(\theta_1-\theta_2)}
= \frac{1}{x_2-b_2}.
\end{aligned}
\end{equation}
We can take $a_2=1,b_2=4.5,x_2=1000$ to see that this is not always true.

\bibliographystyle{plainnat}
\bibliography{reference}
\end{document}